\newcommand{\comment}[1]
{\ifthenelse{\boolean{commentson}\AND\boolean{commentsaon}}
   {{\par\noindent\mbox{}{\small\color{blue}[ *** #1 ]\par}\noindent\par}}{}}
\newcommand{\commenta}[1]
{\ifthenelse{\boolean{commentsaon}}
   {{\par\noindent\mbox{}{\small\color[rgb]{0, .5, 0}[ *** #1 ]\par}\noindent\par}}{}}
\renewcommand*{\today}{16 February 2016}
\markboth{\today}{\today}
\newtheorem{theorem}{Theorem}[section]
\newtheorem{df}[theorem]{Definition}
\newtheorem{propo}[theorem]{Proposition}
\newtheorem{lemma}[theorem]{Lemma}
\newtheorem{example}[theorem]{Example}
\newtheorem{remark}[theorem]{Remark}
\newtheorem{corollary}[theorem]{Corollary}
\newcommand*{\seq}[2][n]
            {{\ensuremath{#2_{1}, \allowbreak \ldots, \allowbreak #2_{#1}}}}
\newcommand*{\SEQ}[3]
            {{\ensuremath{#1_{#2}, \allowbreak \ldots, \allowbreak #1_{#3}}}}
\newcommand*{\SEQC}[3]
            {{\ensuremath{#1_{#2} \cdots #1_{#3}}}}
\newcommand*{\notmodels}{\mathrel{\,\not\!\models}}
\newcommand{\restrict}[2]{#1\,{\rule[-.65ex]{.3pt}{2.5ex}}\,_{ #2}}
\newcommand*{\vars}{{\ensuremath{\it vars}}}
\newcommand*{\HU}{{\ensuremath{\cal{H U}}}\xspace}
\newcommand*{\HB}{{\ensuremath{\cal{H B}}}\xspace}
\newcommand*{\TU}{{\ensuremath{\cal{T U}}}\xspace}
\newcommand*{\TB}{{\ensuremath{\cal{T B}}}\xspace}
\newcommand*{\M}{{\ensuremath{\cal M}}\xspace}
\newcommand*{\T}{{\ensuremath{\cal T}}\xspace}
\newcommand*{\NN}{{\ensuremath{\mathbb{N}}}\xspace}
\newcommand*{\pre}{{\ensuremath{\mathit{pre}}}\xspace}
\newcommand*{\post}{{\ensuremath{\mathit{post}}}\xspace}
\newcommand*{\pgets}{\ensuremath{\mathrel{\mbox{\tt:-}}}}
\DeclareRobustCommand{\mydate}
   {\makebox[0pt][r]{\ifthenelse{\boolean{commentsaon}}{Draft, }{}%
                     \today}}
\title{Proving completeness of logic programs with the cut}
\author[W. Drabent]
  {W{\l}odzimierz Drabent 
\hfill \makebox[0pt][l]{\hspace{7.85em}\normalsize\mydate}
\\
    Institute of Computer Science,
         Polish Academy of Sciences,
         and
         IDA, Link\"opings universitet, Sweden
 \\     {\tt drabent\,{\it at}\/\,ipipan\,{\it dot}\/\,waw\,{\it dot}\/\,pl}
  }
\begin{document}
\maketitle

\begin{abstract}
Completeness of a logic program means that the program  
produces all the answers required by its specification.
The cut is an important construct of programming language Prolog.
It prunes part of the search space, this may result in a loss of completeness.
This paper proposes a way of proving completeness of programs with the cut.
The semantics of the cut is formalized by describing how SLD-trees are pruned.
A sufficient condition for completeness is presented, proved sound, and
illustrated by examples.

\end{abstract}

\begin{keywords}
logic programming,
operational semantics,
program correctness,
program completeness,
the cut
\end{keywords}

\section{Introduction}

Some constructs of programming language Prolog prune part of the search space,
i.e.\ of an LD-tree.
The basic pruning construct is the cut.
Pruning does not change the declarative meaning of a program; 
the program treated as a set of logic formulae is the same
with and without pruning constructs.
What is changed is the operational semantics -- the way the program is executed.
As pruning means skipping some fragments of the search space, it may result
in Prolog missing some answers. 
This paper presents a way of proving completeness of programs with the cut,
i.e.\ 
proving that a Prolog program would produce all the answers required by a
specification.

The work has to be based on a formal semantics.
Usually the semantics of the cut is described
in terms of explicit representation of computation
states, stacks of backtrack points, 
numerical labels related to cut invocations
etc, like in 
\cite{DBLP:journals/tcs/Billaud90,DBLP:journals/scp/Vink89,%
DBLP:journals/tplp/Schneider-KampGSST10,DBLP:journals/tplp/Andrews03}.
Some approaches require transforming programs into a special syntax
\cite{DBLP:journals/tcs/Billaud90,DBLP:conf/flops/KrienerK14}, or restrict
the class of programs dealt with (\cite{DBLP:conf/flops/KrienerK14}
requires so-called cut-stratification).
Some approaches describe only approximations of the semantics.
  The semantics of
 \cite{DBLP:journals/tplp/Schneider-KampGSST10}
does not distinguish success from failure, as the purpose of the semantics is
termination analysis.
The semantics of \cite{DBLP:conf/flops/KrienerK14}
may describe answers which actually are not computed; 
such inaccuracy is acceptable as the semantics is intended as a basis for
abstract interpretation, which introduces inaccuracies anyway.
Of course such semantics is inadequate for reasoning about program completeness.

In this paper we define the semantics of programs with the cut
in terms of pruning LD-trees.
Such approach is convenient -- the main proof of this paper is based on
comparing pruned and non-pruned LD-trees.
It is also closer to the usual way of describing operational semantics
(in terms of SLD-resolution)  than the approaches mentioned above.
Two approaches somewhat similar to ours
are those of Apt \cite{Apt-Prolog}
and Spoto \cite{DBLP:journals/jlp/Spoto00}.
They also employ trees, but the trees are not defined as subgraphs of the
LD-tree. 
In \cite{Apt-Prolog}, initial queries containing the cut seem not
considered. 
The approach of \cite{DBLP:journals/jlp/Spoto00}
seems more complicated than ours and that of  \cite{Apt-Prolog}.
For instance,
new cut symbols are added, to embed in tree nodes information about 
the origin of each cut.
Our formal semantics considers definite clause programs with the cut, and
Prolog selection rule.
Similarly to \cite{Apt-Prolog,DBLP:journals/jlp/Spoto00},
it does not deal with modifying the selection 
rule by means of delays (also called coroutining).
  Other control constructs, like
the conditional or negation as failure, can be expressed by means of the cut.

Little work has been done on reasoning about completeness of logic programs, see
\cite{drabent.arxiv.coco14} (or \cite{drabent.lopstr14}),
\cite{Deransart.Maluszynski93},  and the references therein;
see also Section \ref{sec:compl.decl} for a particular completeness
proving method.

An approach to proving completeness in presence of pruning is presented 
in  \cite{drabent.arxiv.coco14} (also reported in \cite{drabent.lopstr14}).
It is based on a more abstract view of pruning than this work.
It does not directly refer to a pruning construct in programs,
like the cut.  
For a completeness proof, 
one has to separately find out which clauses are applied to
each selected atom in a pruned SLD-tree.  
Determining the clauses may be not obvious, as a single invocation of the cut
may result in pruning children of many nodes of a tree.
  Moreover, 
  different numbers of children may be pruned 
  for nodes with similar selected atoms.
On the other hand, 
the approach is not restricted
to the selection rule of Prolog, and applies to any kind of pruning.
The author is not aware of any other work on proving properties of programs with
the cut, particularly on proving completeness.  

A related subject is abstract interpretation;
for its applications to programs with the cut
see \cite{DBLP:journals/tplp/Schneider-KampGSST10,DBLP:conf/flops/KrienerK14}
and the references therein.
In abstract interpretation, properties of programs are derived automatically,
however the class of possible properties is restricted to the chosen abstract
domain.

The main result of this paper is a sufficient condition for completeness of
LD-trees pruned due to the cut.  (As completeness depends on initial queries
we formally do not talk about program completeness, but completeness of
trees.)  The sufficient condition is proved sound w.r.t.\ the formal
semantics.  It is illustrated by a few examples.
A preliminary version of the sufficient condition, restricted to the cut in
the last clause of a procedure, appeared in \cite{drabent.lopstr14}.

Pruning constructs, like the cut, may destroy completeness of programs,
but they preserve program correctness.  However it is possible that 
a logic program is incorrect, but behaves correctly (for some initial
queries) under pruning, as wrong answers are pruned.
Such programming technique is called ``red cut''.  
Proving correctness in such case is outside of the scope of this paper,
and is a subject of future work.
See \cite{drabent.arxiv.coco14} 
for a sufficient condition for correctness in a context
of the other approach to pruning.
Another subject of future work is dealing with other selection rules
(Prolog with delays).

Let us outline the rest of the paper.
The next section is an overview of basic concepts. 
Section \ref{sec:operational.sem} deals with the operational semantics,
first discussing LD-resolution and then introducing the semantics of
LD-resolution with the cut.
Section \ref{sec:corr-compl} discusses proving correctness and completeness
of programs without pruning.
In particular, it discusses a specific notion of correctness
related to the operational semantics (LD-resolution); this notion is 
needed in the next section.
Section \ref{sec:compl.cut} presents a sufficient condition for completeness 
in the presence of the cut.
Section \ref{sec:examples} presents
some example proofs of program completeness.
The Appendix contains proofs missing in
Section \ref{sec:compl.cut}.

\section{Preliminaries}
In this paper we consider definite clause programs (informally -- logic
programs without negation),
 and Prolog programs that are definite clause programs with
(possibly) the cut.
We assume that the reader is familiar with basics of Prolog, and basics of
the theory of logic programming, including the notions of
(Herbrand) interpretation/model, logical consequence, (definite clause) program, query,
the least Herbrand model, substitution, unification,
SLD-derivation, SLD-tree, and soundness / completeness of SLD-resolution
\cite{nilsson.maluszynski.book,lloyd87,Apt-Prolog}.
We follow the definitions and notation of \cite{Apt-Prolog}, unless
stated otherwise.  
In particular,  the elements of
SLD-derivations and nodes of SLD-trees are queries, i.e.\ conjunctions of atoms,
represented as sequences of atoms.
(Instead of queries, 
the other approach \cite{nilsson.maluszynski.book,lloyd87} uses 
goals, i.e.\ negations of queries.)
LD-resolution (LD-derivation, LD-tree) is SLD-resolution (SLD-derivation,
SLD-tree)  with Prolog selection rule -- in any query its
first atom is selected; see also Section \ref{sec:LD}.

\pagebreak[3]

Following \cite{Apt-Prolog}, we assume that truth of a formula is defined in
such a way that  $I\models F$ iff $I\models\forall F$ (for any formula $F$
and any interpretation, or theory, $I$).
An atom whose predicate symbol is $p$ will be called
a {\em$p$-atom} (or an {\em atom for $p$}).  Similarly, a clause
whose head is a $p$-atom is a {\em clause for $p$}.  
In a program $P$, 
by {\em procedure $p$} we mean 
the set of clauses for $p$ in $P$.

We do not require that the considered alphabet consists only 
of the function and predicate symbols occurring in the considered program.
The Herbrand universe (i.e.\ the set of ground terms) will be denoted by \HU,
the Herbrand base (the set of ground atoms)  by \HB,
and the sets of all terms, respectively atoms, by \TU and \TB.
For an expression (a program) $E$
by $ground(E)$ we mean the set of ground instances
of $E$ (ground instances of the clauses of $E$).
$\M_P$ denotes the least Herbrand model of a program $P$.

By an {\em answer} for a program $P$ we mean a query $Q$ such that 
$P\models Q$.
(In \cite{Apt-Prolog} answers are called ``correct instances of queries''.)
By a {\em computed answer} for a program $P$ and a
   query $Q_0$ we mean an instance $Q\theta$ of $Q_0$ where $\theta$ is a
   computed answer substitution \cite{Apt-Prolog} 
   obtained from some successful SLD-derivation for $Q_0$ and $P$.
Often it is not necessary to distinguish answers and computed answers, as 
by soundness and completeness of SLD-resolution,
$Q$ is an answer for $P$ iff $Q$ is a computed answer for $P$ (and some query).

  Names of variables begin with an upper-case letter.
The set of variables occurring in an expression $E$ will be denoted 
$\vars(E)$. 
For a substitution \linebreak[3] $\theta = \{X_1/t_1,\ldots,X_n/t_n\}$, 
we denote
$do m(\theta)=  \{X_1,\ldots,X_n\}$, $r n g(\theta)=\vars(\{\seq t\})$,
and $\vars(\theta) = do m(\theta)\cup r n g(\theta)$.
The substitution $\theta$ is {\em ground} if 
$\seq t$ are ground terms.
Note that if $\theta,\sigma$ are ground substitutions with disjoint domains, 
$do m(\theta)\cap do m(\sigma) = \emptyset$,
then $\theta\sigma = \theta\cup\sigma = \sigma\theta $.
The restriction of
$\theta$ to a set $V$ of variables is
$ \restrict \theta V = \{\, X/t \in \theta \mid X\in V \,\} $.
By $\restrict \theta E$ we mean $\restrict \theta {\vars(E)}$.
The empty substitution is denoted by $\epsilon$.
A substitution $\theta$ is {\em idempotent} when
 $do m(\theta)\cap r n g(\theta) = \emptyset$.
Abbreviation mgu stands for ``most general unifier''.
A unifier $\theta$ of expressions $E_1, E_2$ is {\em relevant}
if $\vars(\theta)\subseteq\vars(E_1)\cup\vars(E_2)$.

  We use the list notation of Prolog.  So 
  $[\seq t]$  ($n\geq0$) stands for the list of elements $\seq t$.%
\footnote
{\newcommand{\mydot}{\mbox{{\LARGE\ensuremath{\bm.}\hspace{-1pt}}}}
Formally, $[\seq t]$ is an alternative notation for the term 
$
\mydot(t_1, \mydot(t_2, \ldots ,\mydot(t_n,[\,])\ldots))  $,
where {\LARGE\ensuremath{\bm.}}  is the list constructor and $[\,]$ is the
empty list.
Also, $[t|u]$, $[s,t|u]$ is an alternative notation for, respectively,
$\mydot(t,u)$, $\mydot( s, \mydot(t,u))$.
}
  Only a term of this form is considered a list.
 (Thus terms like $[a,a|X]$, or  $[a,a|a]$ where $a$ is distinct from $[\,]$,
  are not lists).
Sometimes, in examples, we will use the Prolog symbol \pgets\ 
instead of $\gets$ in programs.
The set of natural numbers will be denoted by \NN.

\section{Semantics for definite clause programs with the cut}
\label{sec:operational.sem}
This section formalizes a main part of the semantics of Prolog.
We present an operational semantics of definite clause programs augmented
with the cut ($!$).  First we abstract from the cut, describing LD-resolution.
Then we describe how the cuts prune LD-trees.
We begin with a note of declarative semantics.

To incorporate the cut into programs, let us add a new 0-argument predicate
symbol $!$ to the alphabet, and extend the set \TB of atoms:
{$\TB^+ = \TB\cup \{\,!\,\}$}.
A {\em program with cuts} is a finite sequence of definite clauses
of the form $H\gets\seq B$, where $n\geq0$, $H\in\TB$, and $\seq B\in\TB^+$.

In the rest of the paper we write ``program'' for ``program with cuts''.
Sequences of atoms from $\TB^+$ will often be denoted 
by $\vec A, \vec B$ etc, with possible indices. 
When this does not lead to ambiguity,
we sometimes treat queries as sets of atoms, and programs with cuts as sets
of clauses, and write e.g.
$\vec A\subseteq S$
to say that each atom of the sequence $\vec A$ is in the set $S$,
or say that a clause is a member of a program.

\paragraph{Declarative semantics.}
When considering programs from the point of view
of logic, atom ! will be treated as true in each interpretation.
Thus  $I\models \vec A, {!},\vec B$  iff $I\models \vec A,\vec B$
(where $\vec A, \vec B\subseteq\TB^+$).
So, in what follows we assume that interpretations do not describe the
semantics of~!. 
Hence by a Herbrand interpretation we mean a set of ground atoms from \HB.
Assume that a definite program $P'$ is a program with cuts $P$ 
with each ! removed.
Then $P,P'$ have the same models, the same Herbrand models, and thus
the same least Herbrand model and the same answers.

\subsection{LD-resolution.}
\label{sec:LD}
For our purposes we need a slight generalization of the standard
LD-resolution for programs with the cut.
The role of the cut is pruning LD-trees.  So we first consider LD-resolution,
where the cut is neglected, and then we introduce the semantics of the cut by
defining how LD-trees are pruned.

An {\bf LD-derivation} for a program $P$
is a pair of (finite or infinite) sequences: 
a sequence $Q_0,Q_1,\ldots$ of queries, 
and a sequence $\theta_1,\theta_2,\ldots$ of mgu's.
(The sequences are either both infinite, or both finite
with respectively $n+1,n$ elements, $n\geq0$.)
When $Q_{i-1}= {!}, \vec A$ then $Q_{i}= \vec A$ and $\theta_i=\epsilon$
(the empty substitution).  
Otherwise the successor of $Q_{i-1}$, if any, is as in the standard
LD-resolution:  When
$Q_{i-1}= A, \vec A$ then $Q_{i}= (\vec B,\vec A)\theta_i$, where
$\theta_i$ is an mgu of $A$ and $H$ and $H\gets\vec B$ is a standardized
apart variant%
\footnote{%
  This means that no variable of $H\gets\vec B$ occurs in $\SEQ Q 0 {i-1} $,
  $\seq[i-1]\theta$, or in a clause variant used in deriving
  some $Q_j$, for $0<j<i$.
  If (some variant of) the head of $C$ is unifiable with $A$ then we say that 
  $C$ is {\em applicable} to query $A, \vec A$.
}
 of a clause $C$ of the given program.
Without loss of generality we can assume that the employed mgu's are
idempotent and relevant.
A derivation $Q_0,\ldots,Q_n;\seq\theta$ is {\em successful} if its last
query is empty.
The (computed) {\em answer} of such derivation is $Q_0\theta_1\cdots\theta_n$.

As a query $Q$ may occur in a derivation $D$ a few times, one should speak about
occurrences of queries in derivations.  The same for an atom in a query, an
atom selected in a derivation, etc.
However, to simplify the presentation, we usually skip the word ``occurrence''.

The notion of derivation described above is slightly different from those of
\cite{Apt-Prolog} and \cite{Doets}.  
In \cite{Apt-Prolog} a proper prefix of a derivation is not a derivation,
while here it is.
In \cite{Doets} the substitutions of a derivation are not the mgu's, but
{\em specializations} (mgu's restricted to the variables of queries);
instead of $\theta_i$ there is a specialization $\restrict{\theta_i}{Q_{i-1}}$.

Consider a derivation $D$ containing a query  $Q_j = \vec B,\vec A $. 
We describe which fragment of $D$ corresponds to an evaluation of $\vec B$.

\begin{df}[subderivation]
Let $ D =Q_0,Q_1,\ldots;\theta_1,\theta_2,\ldots$ be a (finite or infinite)
LD-derivation, and $Q_j = \vec B,\vec A $ be a query in $D$. 
If $D$ contains a query $Q_m= \vec A \theta_{j+1}\cdots\theta_m$, where
$m\geq j$, then $\vec B$ (of $Q_j$) {\bf succeeds} in $D$.

If $\vec B$ (of $Q_j$) does not succeed in $D$ then 
the {\bf subderivation} of $D$ for $\vec B$ (of $Q_j$) is the
(finite or infinite) derivation
$D_j =Q_j,Q_{j+1},\ldots;\theta_{j+1},\theta_{j+2},\ldots$
that contains each query $Q_i$ and substitution $\theta_{i+1}$ of $D$
such that  $i\geq j$.

If $\vec B$ (of $Q_j$) succeeds in $D$ then the {\bf subderivation} of $D$
for $\vec B$ (of $Q_j$) is the derivation 
$D_j =Q_j,\ldots,Q_m;\theta_{j+1},\ldots,\theta_{m}$, where 
$Q_m= \vec A \theta_{j+1}\cdots\theta_m$ and, for $i=j,\ldots,m-1$,
$Q_i=\vec B_i,\vec A\theta_{j+1}\cdots\theta_i$ with nonempty $\vec B_i$.
Such subderivation is called {\bf successful}, and
 $\vec B\theta_{j+1}\cdots\theta_m$ is called the
(computed in $D$)  {\bf answer}   for $\vec B$.
  
\end{df}
A subderivation for an atom $p(\vec t)$ of a query $p(\vec t),\vec A$
within a derivation $D$
may be informally understood as a procedure invocation (of procedure $p$).
In an extreme case of empty $\vec B$ (i.e.\ $Q_j=\vec A$), the subderivation for
$\vec B$
of $Q_j$ consists of a single query $Q_j$ (and no substitutions).
Due to the clauses being standardized apart and the mgu's being relevant, we have:

\begin{lemma}
\label{lemma:subderivation-variable}
Let $D$, $Q_j$ and $D_j$ be as in the definition above
($D$ an LD-derivation,  $Q_j=\vec B,\vec A$ a query of $D$, 
and $D_j$ be the subderivation for $\vec B$ starting at $Q_j$).
Assume that a variable $X$ occurs in $Q_0,\ldots,Q_j;\seq[j]\theta$
or in a clause variant used to derive some of $\seq[j]Q$, and that 
$X$ does not occur in $\vec B$.
Then $X$ does not occur in any mgu $\theta_{j+1},\theta_{j+2},\ldots$ of $D_j$.
Neither it occurs in the prefix $\vec B_i$ of any query 
$Q_i=\vec B_i,\vec A\theta_{j+1}\cdots\theta_i$  of $D_j$ (for $i>j$).
\end{lemma}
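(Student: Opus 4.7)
The plan is to prove both assertions simultaneously by induction on $i - j$, showing at each step $i \geq j$ that (i) $X$ does not occur in the prefix $\vec B_i$ of $Q_i$ and (ii) $X$ does not occur in $\theta_{j+1}, \ldots, \theta_i$. The base case $i = j$ is trivial: the prefix is $\vec B$, which does not contain $X$ by hypothesis, and there are no mgu's yet to consider.

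For the inductive step, assume the claim for some $i \geq j$ and examine how $Q_{i+1}$ and $\theta_{i+1}$ are produced. There are two cases depending on the form of $\vec B_i$. If the selected (first) atom of $\vec B_i$ is $!$, then by definition $\theta_{i+1} = \epsilon$ and $\vec B_{i+1}$ is $\vec B_i$ with its leading $!$ removed; neither introduces $X$, using the induction hypothesis. If instead $\vec B_i = A, \vec B_i'$ for an ordinary atom $A$, then a standardized-apart clause variant $H \gets \vec C$ is chosen and $\theta_{i+1}$ is a relevant mgu of $A$ and $H$. Here two facts combine: by the induction hypothesis $X \notin \vars(A)$, and by the standardizing-apart convention (see the footnote to the definition of LD-derivation) no variable of $H \gets \vec C$ occurs in $Q_0, \ldots, Q_i$, in $\theta_1, \ldots, \theta_i$, or in any earlier clause variant, so in particular $X \notin \vars(H \gets \vec C)$. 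Relevance of $\theta_{i+1}$ then gives $\vars(\theta_{i+1}) \subseteq \vars(A) \cup \vars(H)$, whence $X \notin \vars(\theta_{i+1})$, establishing (ii) for $i+1$.

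For (i), write $\vec B_{i+1} = (\vec C, \vec B_i')\theta_{i+1}$. The variables appearing in this expression before applying $\theta_{i+1}$ are those of $\vec C$ (not containing $X$, as just argued) and those of $\vec B_i'$ (not containing $X$, by induction). Applying $\theta_{i+1}$ cannot introduce $X$ either, since $X \notin \vars(\theta_{i+1})$. Hence $X \notin \vars(\vec B_{i+1})$, completing the induction.

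I expect no substantive obstacle here: the proof is essentially a careful bookkeeping argument that exploits exactly the two hypotheses built into the definition of LD-derivation — standardization apart of clause variants and relevance of mgu's. The only mildly delicate point is being clear that when $i > j$, the standardized-apart clause variant used at step $i+1$ must be fresh with respect to the entire history, including the prefix $Q_0, \ldots, Q_j$ and the clause variants used before step $j+1$, which is exactly what the footnote guarantees; this is what rules out an accidental reintroduction of $X$ via a clause variant inside the subderivation.
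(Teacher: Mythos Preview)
Your proof is correct and follows exactly the approach the paper indicates: induction on $i$, using standardization apart to ensure the fresh clause variant cannot contain $X$, and relevance of the mgu to ensure $\theta_{i+1}$ cannot contain $X$. The paper's own proof is just the one-line hint ``by induction on $i$ (as the clauses employed in $D$ are standardized apart, and the mgu's are relevant)''; you have simply written out the details it omits.
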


\begin{proof}
  By induction on $i$
(as the clauses employed in $D$ are standardized apart, and the mgu's are
 relevant). 
\end{proof}

\pagebreak[3]
The {\bf LD-tree} for a program $P$ and a query $Q$ is defined in a standard way.
The root of the tree is $Q$ and its branches are LD-derivations.
A node $Q'$ to which $k$ clauses of $P$ are applicable, has $k$ children, one
for each such clause of $P$.  The ordering of the children follows that of
the clauses in $P$.
See also Example \ref{ex:pruning:tree} below.
Formally, LD-trees are trees with nodes labelled with queries.
We will often simply say that a node is a query, taking care that this does
not lead to ambiguities.

\subsection{Semantics of the cut.}
LD-trees are the search spaces of Prolog computations.  The role of the cut 
is to skip the search of some fragments of an LD-tree.
We first formalize the order in which Prolog searches LD-trees.
Note that the part of the tree to the right of (any) infinite path is not
searched.

\begin{df}
The preorder sequence $seq(T)$ of the nodes of (an ordered) tree $T$ is defined
recursively as
\[
seq(T) \ = \ Q,seq(T_1),\ldots,seq(T_i),
\]
where 
\begin{tabular}[t]{l}
    $Q$ is the root of $T$, with the children \seq Q (in this order), 
    $0\leq i\leq n$,
\\
    $\seq T$ are the subtrees of $T$ rooted in \seq Q, respectively,
\\
    $\seq[i-1]T$ are finite,
\\
    $i=n$ or $T_i$ is infinite.
\end{tabular}
\end{df}

Before introducing the formal semantics of the cut, we describe 
the semantics informally and illustrate it by an example.
Consider an LD-tree $T$.
When Prolog visits a node $Q''$ with the cut selected,
some nodes of $T$ are pruned; in other words, they will not be
visited in the further search of $T$.  The pruned nodes are (some)
descendants of $Q'$ --
the node which introduced the cut of $Q''$. (The cut appeared first in
a child of $Q'$.)
All the descendants to the right of the path $Q',\ldots Q''$ are pruned.

\newcommand{\arca}{
                   \begin{pgfpicture}
                         \pgfpathmoveto{\pgfpointorigin}
                         \pgfpathlineto{\pgfpoint{1ex}{2ex}}
                         \pgfusepath{stroke}
                   \end{pgfpicture}
                  }
\newcommand{\arcb}{
                   \begin{pgfpicture}
                         \pgfpathmoveto{\pgfpointorigin}
                         \pgfpathlineto{\pgfpoint{-1ex}{2ex}}
                         \pgfusepath{stroke}
                   \end{pgfpicture}
                  }
\newcommand{\arcaa}{
                   \begin{pgfpicture}
                         \pgfpathmoveto{\pgfpointorigin}
                         \pgfpathlineto{\pgfpoint{4ex}{1.7ex}}
                         \pgfusepath{stroke}
                   \end{pgfpicture}
                  }
\newcommand{\arcbb}{
                   \begin{pgfpicture}
                         \pgfpathmoveto{\pgfpointorigin}
                         \pgfpathlineto{\pgfpoint{-4ex}{1.7ex}}
                         \pgfusepath{stroke}
                   \end{pgfpicture}
                  }
\newcommand{\arcv}{\rule{.4pt}{1.7ex}}

\noindent
\begin{minipage}[t]{.5\textwidth}
\begin{example}
\label{ex:pruning:tree}
Consider the program and the LD-tree from the diagram.
The cut executed in node ${!},r,{!}$ prunes the descendants of the nodes
of the path  $ q,{!};\ s, {!},r,{!};\ {!},r,{!}$ to the right of the path.
So after visiting the node ${!},r,{!}$ Prolog visits nodes $r,{!}$ and $r$.

\quad
Assume that the rule $q \pgets s, {!}, r$ is removed from the program.
Hence node  $s, {!},r,{!}$  and  its descendants are removed from the tree.
Now the cut in node ${!}$ is executed.  This prunes the nodes to the right of
the path  $p;\  q,{!};\ t,{!};\ {!}$, namely $r,r$ and $r$.

\end{example}
\end{minipage}
\hfill
\begin{minipage}[t]{.45\textwidth}

\vspace{\parsep}
\mbox{}\ 
$
  \begin{array}[t]{l}
    \\
    p \pgets q,{!}.  \\
    p \pgets r.      \\
    q \pgets s, {!}, r  \\
    q \pgets t.      \\
    s.               \\
    t.               \\
    t \pgets r,r
  \end{array}
$
\hfil\hfil
%
%
%
%
%
%
%
    $
  \begin{array}[t]{c}
    p    \\
    \arcaa  \qquad  \arcbb \\
    \makebox[1em][c]{%
        \begin{array}[t]{c}
          q,{!}                \\
        \arcaa  \quad  \arcbb \\
            \begin{array}[t]{c}
              s,{!},r,{!}    \\
              \arcv            \\
              {!},r,{!}    \\
              \arcv            \\
              r,{!}        \\
            \end{array}
          \qquad\quad
            \begin{array}[t]{c}
              t,{!}    \\
              \arcv            \\
             {!}        \\
              \arcv       \\
              \Box
            \end{array}
            \makebox[.5em][l]{%
                \begin{array}[t]{@{}l}
                    \\
                    \arcbb    \\
                    \quad\ r,r
                \end{array}
                } 
        \end{array}
    }    
\qquad\qquad\quad
r
  \end{array}
$

\hfill\mbox{}
\end{minipage}

\medskip\smallskip

The next two definitions describe the semantics of the cut.
\begin{df}
\label{def:cutting.sequence}
Consider an LD-tree $T$ with a branch $D$ containing consecutive nodes
\[
\begin{array}{l}
 Q_{j-1}= A,\vec A \\ 
 Q_{j}= (\vec B_1,{!},\vec B_2,\vec A)\theta_j \\
 \cdots \\
 Q_{m}= ({!},\vec B_2,\vec A)\theta_j\cdots\theta_m \\
\end{array}
\]
such that $j\leq m$, and
$Q_j,\ldots,Q_m;\SEQ\theta j m$ is a subderivation of $D$ 
for $\vec B_1$.%
\footnote
{Thus a clause variant $H\gets \vec B_1,{!},\vec B_2$ was applied to $Q_{j-1}$,
   and, for $i=j,\ldots,m-1$, 
   each $Q_i$ is of the form 
   $(\vec A_i,{!},\vec B_2,\vec A)\theta_j\cdots\theta_i$ with nonempty
   $\vec A_i$.
}
We say that the node $Q_{j-1}$ {\bf introduces} the cut of  $Q_{j}$, 
and that the cut of $Q_{j}$ is {\bf potentially executed} in the node $Q_m$.%
\footnote{%
We write ``the cut of $Q_{j}$'' for brevity.
Formally we deal here with the occurrence of $!$ 
between $\vec B_1\theta_j$ and $\vec B_2\theta_j$ in the node $Q_j$.
When there are more such occurrences then 
the objects introduced by this definition are defined separately for each
occurrence of ! in $Q_j$.
}
Derivation $Q_{j-1},\ldots,Q_m$ is called a {\bf cutting sequence} of nodes
in $T$.  Its first node $Q_{j-1}$ will be called the {\bf introducing node},
and its last node $Q_m$ -- the {\bf executing node} of the
cutting sequence. 

For a case where the cut occurs in the initial query ($j=0$,
$\theta_0=\epsilon$, and $Q_{j-1}$ does not exist),
``potentially executed'' is defined as above, and 
the cutting sequence is $\SEQ Q 0 m$.
(Note that such cutting sequence does not have its introducing node.)

Each node of the form ${!},\vec A$ in the tree is the executing node of a unique cutting sequence; the sequence will be called the 
{\bf cutting sequence of the executing node} ${!},\vec A$.

The nodes of $T$ {\bf pruned by a cutting sequence} $Q_{j-1},\ldots,Q_m$
are those children of each $Q_{i-1}$ that are to the right of $Q_{i}$,
for $i=j,\ldots,m$, and the descendants of the children.
The nodes {\bf pruned by an executing node}
 $Q_m$ are the nodes pruned
by the cutting sequence of $Q_m$. 
\end{df}

\noindent
Note that if a node $Q$ is pruned by a cutting sequence $D$ then
$Q$ does not precede in $seq(T)$ any of the nodes of $D$.
(More precisely, if $Q$ and the nodes of $D$ occur in $seq(T)$ then each node
of $D$ precedes $Q$.)
In  \cite{Apt-Prolog},
the introducing node of (the cutting sequence of) an executing node ${!},\vec
A$ is called the origin of the cut atom in ${!},\vec A$.

The definition above
describes pruning due to a single cut in an LD-tree 
(or -- more generally -- pruning trees in which no executing node is pruned).
When there are more cuts, an executing node $Q'$ may prune an executing node
$Q''$.  Moreover, Ex.\,\ref{ex:pruning:tree} shows that in some cases
some nodes pruned by $Q''$ are not pruned by $Q'$ and remain in the final
tree. 
This should be considered while defining pruned LD-trees

For the next definition we need to consider trees which are subgraphs of
LD-trees. 
By a cutting sequence of a subgraph $T'$ of an LD-tree $T$ we mean a cutting
sequence $D$ of $T$, such that each node of $D$ is in $T'$.
By an executing node $Q'$ of $T'$ we mean a node of the form ${!},\vec A$.

\begin{df}[pruned LD-tree]

Let $T$ be an LD-tree.
Consider the possibly infinite sequence \T of trees
$ T_0, T_1,\ldots$,
such that $T_0=T$ and 
\begin{itemize}
\item 
$T_i$ is obtained from $T_{i-1}$ by removing the nodes pruned by
 the $i$-th executing node in $seq(T_{i-1})$
(for each $T_i$ in \T, where $i>0$),

\item
if some $seq(T_n)$ (where $n\geq0$) contains exactly $n$ executing nodes then the sequence
\T is finite and $T_n$ is its last element, otherwise \T is infinite.
\end{itemize}
Let $T'$ be 
the subgraph of $T$ containing the nodes occurring in each of the trees
$T_0,T_1,\ldots$.
(Thus when the sequence is finite then $T'$ is its last element.)
The {\bf pruned LD-tree} $pruned(T)$ corresponding to $T$ 
(shortly: the pruned $T$)
consists of those nodes of $T'$ that occur in $seq(T')$.
\end{df}

For informal explanation of the definition, consider the subgraph
 $T''$ of $T_{i-1}$
consisting of those nodes
that are in  $seq(T_{i-1})$  between the root of $T_{i-1}$ and $Q_i$,
the $i$-th executing node in $seq(T_{i-1})$.
This subgraph describes the computation up to the $i$-th execution of the cut.
The nodes pruned by (this execution of) the cut are absent from $T_i$.
Whole $T''$ is a subgraph of $T_i$.
Also,  $T''$ will remain unchanged in (i.e.\ be a subgraph of)
all the subsequent trees $T_i,T_{i+1},\ldots$.

\section{Correctness and completeness of programs}
\label{sec:corr-compl}
In preparation for the main subject of this work --
program completeness related to the operational semantics with pruning,
this section discusses some semantic issues abstracting from pruning.
The purpose is twofold,
introducing some concepts needed later on,
and providing a ground for comparing 
the proof methods based on declarative semantics, with the method of this paper,
dealing with pruning.
First we discuss correctness and completeness of programs,
two notions related to the declarative semantics.
We also present the standard ways of reasoning about program termination.
Then we discuss a specific notion of correctness related to operational
semantics, namely to LD-resolution.

\subsection{Declarative notions of correctness and completeness}
\label{sec:compl.decl}

\paragraph{Specifications.}
From a declarative point of view, logic programs compute relations. 
A specification should describe these relations.
It is convenient to assume that the relations are over the Herbrand universe. 
A handy way for describing such relations is a Herbrand interpretation;
it describes, as needed, a relation for each predicate symbol of the program.
So, by a {\bf specification} we mean a Herbrand interpretation, i.e.\ a
subset of \HB.

\paragraph{Correctness and completeness.}
In imperative and functional programming, (partial) correctness usually means
that the program results are as specified (provided the program terminates).
Logic programming involves non-determinism of a specific kind.  A query may
have 0, 1, or more answers, and the idea is to compute all of them.
Thus in logic programming the notion of correctness divides 
into two: {\em correctness} (all the results are
compatible with the specification) and {\em completeness} (all the results
required by the specification are produced). 
In other words, correctness means that the relations defined by the program are
subsets of the specified ones, and completeness means inclusion in the
opposite  direction. 
Formally:
\begin{df}
\label{def:corr:compl}
Let $P$ be a program and $S\subseteq\HB$ a specification.
$P$ is {\bf correct} w.r.t.\ $S$ when $\M_P\subseteq S$;
it is {\bf complete} w.r.t.\ $S$ when $\M_P\supseteq S$.
\end{df}
We will sometimes skip the specification when it is clear from the context.
It is important to understand the relation between specifications and
the answers of correct (or complete) programs
\cite{drabent.arxiv.coco14}.
A program $P$ is correct w.r.t.\ a specification $S$ iff (for any query $Q$)
$Q$ being an answer of $P$  implies $S\models Q$.
(Remember that $Q$ is an answer of $P$ iff $P\models Q$.)
Program $P$ is complete w.r.t.\ $S$ iff
$S\models Q$ implies that $Q$ is an answer of $P$ (for any ground query $Q$).%
\footnote{%
\label{footnote:example:notequivalent}%
To show that the equivalence does not hold for all queries,
assume a two element alphabet of
    function symbols, with a unary $f$ and
    a constant $a$.  Take $P = \{\, p(a).\ p(f(X)).\}$, $S=\HB$, 
    $Q=p(X)$. 
The program is complete w.r.t.\ $S$ and $S\models Q$, but $Q$ is not an
answer of $P$.
}  %
For arbitrary queries, completeness of $P$ w.r.t.\ $S$ implies $P\models Q$
when in the underlying alphabet
there is a non-constant function symbol not occurring in $P,Q$,
or there are $k$ constants not occurring in $P,Q$, where $k\geq0$
is the number of distinct variables occurring in $Q$
\cite{drabent.arxiv.coco14}.
In particular, the implication holds when the alphabet of function symbols is
infinite (and $P$ is finite)
   \cite{DBLP:books/mk/minker88/Maher88}.
(See \cite{drabent.Herbrand.arxiv.2015} for further discussion.)

A note on pragmatics of the notion of completeness may be useful.
Remember that
the relations described by a specifications are on {\em ground}
terms.  
So, strictly speaking, specifications do not describe program answers, but
ground instances of the answers.
For a non-ground $Q$,
it depends on the underlying alphabet whether $S\models Q$ or not.%
\footnote{%
In the example from footnote \ref{footnote:example:notequivalent}
we have $S\models Q$, but when a new
function symbol is added to the alphabet then $S\notmodels Q$.
}
Informally, obtaining an answer $A\in\HB$ from a computation (an SLD-tree)
means that $A$ is a ground instance of a computed answer of the tree.
We are not interested
whether $A$ is actually a computed answer, or a more general computed answer
has been produced.  Similarly, obtaining 
answers $A,A'\in\HB$ may happen when 
both of them are instances of a single computed answer, 
or they are (instances of) different computed ones.

\paragraph{Approximate specifications.}
It happens quite often in practice that the relations defined by a program
are not known exactly and, moreover, such knowledge is unnecessary.
It is sufficient to specify the program's semantics approximately.
More formally, to provide distinct specifications, say
$S_{\it c o m p l}$ and $S_{corr}$, for completeness and correctness.
The intention is that  $S_{\it c o m p l}\subseteq \M_P\subseteq S_{corr}$,
where $\M_P$ is the least Herbrand model of the program.
So the specification for completeness says what the program has to compute,
and the specification for correctness -- what it may compute.
In other words, the program should not produce any answers rejected by the
specification for correctness. 
It is irrelevant whether atoms from 
$S_{corr}\setminus S_{\it c o m p l}$ are, or are not, answers of the program.
Various versions of the program may have different semantics,
but each version should be correct w.r.t.\ $S_{corr}$ and complete w.r.t.\ 
$S_{\it c o m p l}$.
As an example, consider the standard append program, and atom 
$A = {\it append}([a], 1, [a|1])$.  It is irrelevant
whether $A$ is an answer of the program, or not. 
For further discussion and examples see 
\cite{drabent.arxiv.coco14,DBLP:journals/tplp/DrabentM05shorter},
see also Ex.\,\ref{ex:append:proof}.

\paragraph{Reasoning about correctness.}
Although it is outside of the scope of this paper, we briefly mention proving
program correctness.
A sufficient condition for a program $P$ being correct w.r.t.\ a
specification $S$ is $S\models P$.
In other words,
     for each ground instance 
    $
    H\gets \seq B
    $
    of a clause of $P$,
     if $\seq B\in S\cup\{\,!\,\}$ then $H\in S$.
Deransart \cite{DBLP:journals/tcs/Deransart93} attributes this method to
\cite{Clark79}.   
See  \cite{drabent.arxiv.coco14,DBLP:journals/tplp/DrabentM05shorter}
for examples and discussion.

\paragraph{Reasoning about completeness.}
Little work has been devoted to reasoning about completeness of programs.
See \cite{drabent.arxiv.coco14} for an overview.  We summarize the approach
from \cite{drabent.arxiv.coco14}, also presented in 
\cite{drabent.lopstr14}.
That approach is a starting point for the method introduced in this paper.
We first need two auxiliary notions.
\begin{df}
    A program $P$ is {\bf complete for a query} $Q$ w.r.t.\  $S$
    when
    $S\models Q\theta$ implies that $Q\theta$ is an answer for $P$,
    for any ground instance $Q\theta$ of $Q$.
\end{df}
Informally,  $P$ is complete for $Q$
when 
all the answers for $Q$ required by the specification $S$ are answers of $P$.
Note that a program is complete w.r.t.\ $S$
 iff it is complete w.r.t.\ $S$ for any query
 iff it is complete w.r.t.\ $S$ for any query $A\in S$.

\begin{df}
A program $P$ is {\bf semi-complete} 
w.r.t.\ a specification $S$ if
$P$ is complete w.r.t.\ $S$ for any query $Q$ for which there exists a finite SLD-tree.
\end{df}

Less formally, the existence of a finite SLD-tree means
that, under some selection rule, the computation for $Q$ and $P$ terminates.
(Sometimes this is called ``universal termination''.)
For a semi-complete program $P$,
if the computation
terminates then all the answers for $Q$ required by the specification have
been obtained.
In other words, $P$ is complete for query $Q$.
So establishing completeness may be done in two steps: showing
semi-completeness and termination. 
Obviously, a complete program is semi-complete.

Our sufficient condition for semi-completeness employs the
following notion, stemming from \cite{Shapiro.book}.
\begin{df}
\label{def:covered}
  A ground atom $H$ is
  {\bf covered by a clause} $C$ w.r.t.\ a specification $S$
  if $H$ is the head of a ground instance  
  $
  H\gets \seq B
  $
  ($n\geq0$) of $C$, such that all the atoms $\seq B$ are in $S\cup\{\,!\,\}$.

  A ground atom $H$ is {\bf covered} {\bf by a program} $P$ w.r.t.\ $S$
  if it is covered w.r.t.\ $S$ by some clause $C\in P$.
\end{df}

\begin{theorem}[semi-completeness {\rm\cite{drabent.arxiv.coco14}}]
\label{th:semi-complete}
If all the atoms from a specification $S$ are covered w.r.t.~$S$
 by a program $P$ 
then $P$ is semi-complete w.r.t.~$S$.
\end{theorem}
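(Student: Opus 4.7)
The plan is to prove the contrapositive structured form: take any query $Q$ with a finite SLD-tree and any ground instance $Q\theta$ with $S\models Q\theta$, and construct a successful SLD-derivation for $Q$ whose computed answer has $Q\theta$ as an instance. Strong completeness of SLD-resolution is not quite enough here (it requires $P\models Q\theta$, which is what we want to establish), so the proof will work inside the given finite SLD-tree and proceed by induction on its size (or on its depth).

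For the induction, fix a selection rule $R$ under which the SLD-tree $T$ of $Q$ is finite. In the base case $Q$ is empty, so the empty derivation already gives $Q$ as an answer. In the inductive step, let $A$ be the atom selected by $R$ in $Q=\vec A_1,A,\vec A_2$. From $S\models Q\theta$ we have that every atom of $Q\theta$ lies in $S\cup\{\,!\,\}$. If $A={!}$, the unique child $Q'=\vec A_1,\vec A_2$ in $T$ has a strictly smaller finite SLD-tree and $S\models Q'\theta$, so the induction hypothesis yields a successful derivation of $Q'$, to which we prepend the trivial cut step. If $A$ is an ordinary atom, then $A\theta\in S$, so by hypothesis there is a clause $C=H\gets\vec B$ of $P$ and a grounding $\tau$ of its variables with $H\tau=A\theta$ and $\vec B\tau\subseteq S\cup\{\,!\,\}$. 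Take a standardized-apart variant of $C$ and an mgu $\sigma$ of $A$ with its head; $C$ is applicable to $Q$ at $A$, so $T$ contains the corresponding child $Q'=(\vec A_1,\vec B,\vec A_2)\sigma$ together with a finite SLD-subtree.

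The key intermediate step is to exhibit a ground instance of $Q'$ whose atoms all lie in $S\cup\{\,!\,\}$, so that the induction hypothesis applies to $Q'$. For that, observe that $\theta$ (on $\vars(Q)$) and $\tau$ (on the variables of the variant of $C$) combine into a single substitution that unifies $A$ and $H$; since $\sigma$ is an mgu there is $\gamma$ with $\sigma\gamma=\theta\cup\tau$ (up to irrelevant variables). Then $Q'\gamma = \vec A_1\theta,\vec B\tau,\vec A_2\theta$, a ground query contained in $S\cup\{\,!\,\}$. By the induction hypothesis applied to $Q'$ and its SLD-subtree we obtain a successful derivation of $Q'$ whose computed answer has $Q'\gamma$ as an instance; prefixing the single resolution step from $Q$ to $Q'$ produces a successful derivation of $Q$ whose computed answer is more general than $Q\theta$, completing the step.

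The main obstacle is the substitution bookkeeping in the non-cut case: carefully choosing the standardized-apart variant, constructing $\theta\cup\tau$, and invoking the universal property of mgu's to produce $\gamma$ with $\sigma\gamma=\theta\cup\tau$ on the relevant variables. Once this is handled, the induction itself is routine; the cut case is essentially trivial because the $!$-step of LD-resolution uses the empty substitution and strictly shrinks the tree. It is worth noting that although the statement is phrased for SLD-trees (arbitrary selection rule), the argument is uniform in $R$ because the coverage assumption does not depend on the selection rule; we merely need one rule witnessing finiteness, and we do the induction along that rule.
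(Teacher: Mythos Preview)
The paper does not actually prove this theorem: it is quoted from \cite{drabent.arxiv.coco14} and immediately followed by Example~\ref{ex:append:proof}, with no proof in between or in the Appendix. So there is nothing in the paper to compare your argument against.

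Your argument is correct and is the natural one. Two remarks. First, a small simplification: since in this paper an \emph{answer} is any $Q$ with $P\models Q$ (not a computed answer), you do not need to show that $Q\theta$ is an instance of the computed answer of the constructed derivation. It suffices to carry the ground instance down the tree: by induction you obtain $P\models Q'\gamma$, and since $A\theta\gets\vec B\tau$ is a ground instance of a clause of $P$ you get $P\models A\theta$, hence $P\models Q\theta$. This sidesteps the bookkeeping about variables of $A\sigma$ that do not occur in $Q'$ (the point where you would otherwise invoke something like Lemma~\ref{lemma:subderivation-variable}). Second, your last paragraph slightly overstates the role of the selection rule: you do need the particular $R$ that witnesses a finite tree, because finiteness is what makes the induction well-founded; the coverage hypothesis alone does not give $S\subseteq\M_P$ (take $P=\{p\gets p\}$, $S=\{p\}$). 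But you already use $R$ correctly in the body of the proof, so this is only a matter of phrasing.
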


\begin{example}
\label{ex:append:proof}
  Consider the well-known program APPEND:
\[
  app(\,[\,],L,L\,). \qquad\qquad
  app(\,[H|K],L,[H|M]\,) \gets app(\,K,L,M\,). \ \ \ \ 
\]  
and a specification 
\[
S_{\rm APPEND}^0 = \{\, app(k,l,m)\in\HB \mid
                k,l,m \mbox{ are lists, } k*l=m
\,\},
\] 
where $k*l$ stands for the concatenation of lists $k,l$.
Consider an atom $A = app(k,l,m)\in S_{\rm APPEND}^0$.
If $k=[\,]$ then $A = app([\,],l,l)$ and $A$ is covered by the first clause
of the program.  Otherwise $A = app([h|k'],l,[h|m'])$, where $k'*l=m'$.
Thus $A$ is covered by an instance $app([h|k'],l,[h|m'])\gets app(k',l,m')$
of the second clause of APPEND.  Hence by Th.\,\ref{th:semi-complete}
APPEND is semi-complete w.r.t.\ $S_{\rm APPEND}^0$.
It is also complete w.r.t.\ $S_{\rm APPEND}^0$, 
as it terminates for each
$A \in S_{\rm APPEND}^0$
(we skip a simple proof \cite{drabent.arxiv.coco14} that the program
is recurrent, see Th.\,\ref{th:termination} below).
Note that APPEND is not correct w.r.t.\ $S_{\rm APPEND}^0$,
as it has answers whose some arguments are not lists, 
e.g.\ $app([a], 1, [a|1])$
(See \cite{drabent.arxiv.coco14} for specifications, w.r.t.\ which the
program is correct.)

\end{example}

\subsection{Reasoning about termination.}
\label{sec:termination}
Termination -- this means finiteness of (S)LD-trees -- is needed to
conclude completeness from semi-completeness,  
and will also be needed for the main result of this paper.
We now we briefly summarize basic approaches to proving program termination 
\cite{Apt-Prolog}.

\begin{df}
A {\em level mapping} is a
function $|\ |\colon \HB\to \NN$ assigning natural numbers to ground atoms.
We additionally assume that $|!|=0$.

A program $P$ is  {\bf recurrent} {w.r.t.\ a level mapping}~$|\ |$
\cite{DBLP:journals/jlp/Bezem93,Apt-Prolog} if, in
every ground instance  $H\gets\seq B\in ground(P)$ of its clause ($n\geq0$),
$|H|>|B_i|$ for all $i=1,\ldots,n$.
A program is {\em recurrent}
if it is recurrent w.r.t.\ some level mapping.   

\label{def:acceptable}%

A program $P$ is {\bf acceptable} w.r.t.\ a specification $S$ and a level
mapping $|\ |$ if 
$P$ is correct w.r.t.\ $S$, and for every
$H\gets\seq B\in ground(P)$
we have $|H|>|B_i|$ whenever $S\models B_1,\ldots,B_{i-1}$.
A program is {\em acceptable} if it is acceptable w.r.t.\ some level mapping
and some specification.

A query $Q$ is {\bf bounded} w.r.t.\ a level mapping  $|\ |$ if, 
for some $k\in\NN$, $|A|<k$ for each ground instance $A$ of an atom of $Q$.

\end{df}
The definition of acceptable is more general than that of 
\cite{AP93,Apt-Prolog},
which additionally requires $S$ to be a model of $P$.
 Both definitions make the same programs acceptable \cite{drabent.arxiv.coco14}.

\begin{theorem}
[termination {\rm\cite{DBLP:journals/jlp/Bezem93,AP93}}]
\label{th:termination}
If $P$ is a recurrent program and $Q$ a bounded query then all
SLD-derivations for $P$ and $Q$ are finite.
   
If a program $P$ is acceptable w.r.t. some specification and some level mapping
then all LD-derivations for $P$ and a bounded query $Q$ are finite.

\end{theorem}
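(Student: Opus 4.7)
The plan is to extend the level mapping $|\ |$ from ground atoms to (bounded) queries and show that every resolution step strictly decreases this extended measure; termination then follows from the well-foundedness of the ordering. I will handle the two parts separately since they differ in both the selection rule and the strength of the hypothesis.

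For the first part (recurrent $P$, arbitrary selection rule, SLD-derivations), I would associate to each bounded query $Q = A_1,\ldots,A_n$ the finite multiset
\[
M(Q) \;=\; \{\, m_1,\ldots,m_n \,\}, \qquad m_i = \sup\{\, |A_i\sigma| \mid A_i\sigma \in \HB\,\},
\]
which is well-defined by boundedness. The multiset extension of $<$ on \NN is well-founded, so it suffices to show: if $Q \Rightarrow_{\theta,C} Q'$ in SLD-resolution then $Q'$ is bounded and $M(Q') < M(Q)$. First, boundedness of $Q$ is preserved under arbitrary instantiation, so the atoms of $Q'$ inherited from $Q\theta$ remain bounded, with their suprema no larger than in $M(Q)$. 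Second, for the selected atom $A_j$ replaced by the body $\vec B\theta$ of a variant $H\gets\vec B$ of $C$, any ground instance $A_j\sigma$ extends (via the mgu $\theta$) to a ground instance of $H\gets\vec B$; recurrence then gives $|A_j\sigma| > |B_i\sigma'|$ for each body atom. Hence every $B_i\theta$ is bounded with supremum strictly less than $m_j$, so removing $m_j$ and inserting smaller numbers yields a strictly smaller multiset.

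For the second part (acceptable $P$, leftmost rule), the measure must be more delicate because acceptability only guarantees a decrease on $|B_i|$ when the preceding atoms $B_1,\ldots,B_{i-1}$ hold in $S$. I would define, for a bounded query $Q = A_1,\ldots,A_n$,
\[
\|Q\| \;=\; \max\Big\{\, \sum_{j=1}^{n} |A_j\sigma| \;\Big|\; \sigma \text{ grounds } Q,\ S\models A_1\sigma,\ldots,A_{j-1}\sigma \text{ for each } j \,\Big\},
\]
taking the sum over the empty set to be $0$ if no such $\sigma$ exists; boundedness of $Q$ makes this finite. When the leftmost atom $A_1$ is resolved with a clause variant $H\gets B_1,\ldots,B_k$ giving $Q' = (B_1,\ldots,B_k,A_2,\ldots,A_n)\theta$, correctness of $P$ w.r.t.\ $S$ ($\M_P\subseteq S$) ensures that any ground instance of $Q'$ making an initial segment succeed corresponds to an instance satisfied in $S$. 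Combined with the acceptability inequality $|H\sigma|>|B_i\sigma|$ whenever $S\models B_1\sigma,\ldots,B_{i-1}\sigma$, each admissible ground instance of $Q'$ contributes a sum strictly less than $\|Q\|$, so $\|Q'\|<\|Q\|$. Well-foundedness of $<$ on \NN\ closes the argument.

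The main obstacle I foresee is precisely the second part: one must (i) verify that $\|\cdot\|$ is well-defined and nonincreasing under instantiation, (ii) lift the ``ground'' acceptability condition through the mgu $\theta$ to arbitrary admissible groundings of $Q'$, and (iii) correctly invoke correctness to discharge the hypothesis $S\models B_1\sigma,\ldots,B_{i-1}\sigma$ in all cases where a derivation actually proceeds past $B_{i-1}$. The first part, by contrast, is essentially a clean multiset-ordering argument using only recurrence and the lifting properties of mgu's.
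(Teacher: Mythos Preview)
First, note that the paper does not actually prove this theorem: it is quoted from \cite{DBLP:journals/jlp/Bezem93,AP93} and used as a black box, so there is no ``paper's own proof'' to compare against. Your first part (recurrent programs via the multiset extension of $<$ on \NN) is the standard argument and is fine.

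Your second part, however, has a genuine gap: the measure $\|Q\|$ you propose need not strictly decrease under an LD-step. The problem is that you \emph{sum} the levels of the body atoms, and acceptability only bounds each $|B_i\sigma|$ individually by $|H\sigma|$; it says nothing about $\sum_i |B_i\sigma|$. Concretely, take the program with clauses $p(2)\gets q(1),q(1)$, \ $q(1)\gets q(0)$, \ $q(0)$, level mapping $|p(n)|=|q(n)|=n$, and $S=\{p(2),q(1),q(0)\}$. This program is acceptable and correct w.r.t.\ $S$. For $Q=p(2)$ you get $\|Q\|=2$, while the resolvent $Q'=q(1),q(1)$ gives $\|Q'\|=1+1=2$ (the constraint $S\models q(1)$ is satisfied), so $\|Q'\|\not<\|Q\|$. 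The obstacles you list (well-definedness, lifting through the mgu, invoking correctness) are real but secondary; the decisive issue is that a sum is the wrong aggregation.

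The standard repair is either (i) to keep a \emph{multiset} of the suprema $|A_i|^+=\sup_\sigma|A_i\sigma|$, but this alone is also insufficient for acceptable programs because $|B_i|^+$ is only guaranteed small after the preceding body atoms have \emph{succeeded}; or, more robustly, (ii) to argue by induction on $|A|^+$ that the LD-tree for a bounded atom $A$ is finite: resolving $A$ gives $B_1\theta$ with $|B_1\theta|^+<|A|^+$, so its LD-tree is finite by induction; for each success $\theta_1$ correctness yields $S\models B_1\theta\theta_1$, whence acceptability gives $|B_2\theta\theta_1|^+<|A|^+$, and so on. This is essentially the Apt--Pedreschi proof.
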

Hence each SLD-tree for $P,Q$ in the first case, 
and the LD-tree for $P,Q$ in the second case
is finite (as programs are finite).
The second part of the theorem holds for a more general class of queries 
(bounded w.r.t.\ $S$) \cite{Apt-Prolog}; we skip the details.
It follows
that  when
    a (finite) program
    $P$ is (i)~semi-complete w.r.t.\ a specification $S$
    and (ii)~recurrent or acceptable w.r.t.\ some level mapping
    (and some specification $S'$) then $P$ is complete w.r.t.~$S$.

\subsection{A notion of correctness related to operational semantics}
\label{sec:corr.operational}
The subject of this paper is completeness of programs when the search space
is pruned by means of the cut.
Such operational semantics does not preserve some basic properties of
SLD-resolution.  For example an instance of a query $Q$ may succeed while
$Q$ fails 
(e.g.\ consider program $p(a)\pgets {!},q.;\ p(X).$, for which query $p(X)$
fails and $p(b)$ succeeds).
Also, we need to reason about the form of atoms selected in derivations.
So a declarative approach is no longer possible; we have to reason in
terms of the operational semantics,
in other words, to express and prove properties inexpressible in terms of
specifications, correctness, and completeness of 
Section \ref{sec:compl.decl}

This section presents a method of reasoning about the form of selected atoms
in LD-derivations, and the form of the corresponding successes.
The approach stems from \cite{DM88} and is due to \cite
{DBLP:conf/tapsoft/BossiC89}, we follow the presentation of \cite{Apt-Prolog}.
Specifications of another kind are needed here, let us call them 
c-s-specifications (c-s for call-success).

\begin{df}[c-s-correctness]

A {\bf c-s-specification} is a pair $\pre,\post$ of sets of atoms, closed under
substitution.
The sets  $\pre, \post \in\TB$ are called, respectively, {\em precondition} and 
{\em postcondition}.

A program is {\bf c-s-correct} w.r.t.\ a c-s-specification \pre,\post when
in each LD-derivation $D$
every selected atom is in $\pre\cup\{\,!\,\}$,
and each atomic computed answer (of a successful subderivation of $D$)
 is in $\post\cup\{\,!\,\}$,
provided that $D$ begins with an atomic query from $\pre$.

\end{df}
For c-s-correct programs and more general initial queries, see below.
The notion of c-s-correctness will be employed in the main part of this work.

\begin{df}[well-asserted]
Let $\pre,\post$ be a c-s-specification.
A clause $C$ is {\bf well-asserted} (w.r.t.\ $\pre,\post$) if for each 
(possibly non-ground) instance $H\gets \seq B$ of $C$ ($n\geq0$) 
\[
\begin{array}{l}
   \mbox{if }\ H\in\pre, \seq[k]B\in \post\cup\{\,!\,\}\ 
   \mbox{ then }\ B_{k+1}\in\pre\cup\{\,!\,\},\
   \mbox{ for } k=0,\ldots,n-1
   \\
   \mbox{if }\ H\in\pre, \seq B\in \post\cup\{\,!\,\}\ \mbox{ then }\ H\in\post.
\end{array}
\]

A program is well-asserted if every its clause is.

A query $Q$ is well-asserted (w.r.t.\ $\pre,\post$) when the
clause $p\gets Q$ is well-asserted w.r.t.\ 
$\pre\cup\{p\}, \linebreak[3]\post\cup\{p\}$,
where $p\in\HB$ is a predicate symbol not occurring in $P,\pre,\post$.
\end{df}
Note that the first atom of a well-asserted query is in $\pre\cup\{\,!\,\}$,
and that if all atoms of a query $Q$ are in $\pre\cup\{\,!\,\}$ then $Q$ is well-asserted.

The following sufficient condition follows from Corollaries 8.8 and 8.9 of
\cite{Apt-Prolog} (with an obvious generalization to programs with the cut).

\begin{theorem}[c-s-correctness]
\label{th:c-s-correctness}
Let $P$ be a program and $\pre,\post$ a c-s-specification.
If $P$ is well asserted w.r.t.\ $\pre,\post$
then $P$ is c-s-correct w.r.t.\ $\pre,\post$

\end{theorem}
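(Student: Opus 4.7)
The plan is to follow the classical argument of Apt's textbook for Corollaries 8.8 and 8.9 \cite{Apt-Prolog}, adapted by treating the cut uniformly. The key move is to strengthen the conclusion to an invariant that is preserved step-by-step: \emph{every query of any LD-derivation starting from a well-asserted query is itself well-asserted}. Claim (a) of c-s-correctness then follows because the first atom of any well-asserted query is in $\pre\cup\{\,!\,\}$. Claim (b) will follow by applying the second clause of the well-assertedness condition to the ``virtual'' clause representing a successful subderivation.

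First I would prove the invariant by induction on the index $i$ of the queries $Q_0,Q_1,\ldots$ of the derivation. The base case $i=0$ is the hypothesis. For the inductive step, consider the transition $Q_{i-1}\to Q_i$. If the selected atom is $!$, then $Q_{i-1}={!},\vec A$, $\theta_i=\epsilon$, and $Q_i=\vec A$; the well-assertedness of $Q_{i-1}$ (seen through the auxiliary clause $p\gets {!},\vec A$) trivially propagates to $Q_i$. If instead $Q_{i-1}=A,\vec A$ with $A$ a non-cut atom, let the clause variant $H\gets\vec B$ be applied with mgu $\theta_i$, so $Q_i=(\vec B,\vec A)\theta_i$. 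The clause variant is well-asserted (well-assertedness is preserved under renaming), and closure of $\pre,\post$ under substitution lets me lift well-assertedness through $\theta_i$. Combining this with the well-assertedness of $p\gets A,\vec A$ gives the two required implications for $p\gets\vec B\theta_i,\vec A\theta_i$; the essential accounting is that the ``inherited'' prefix-condition on the $\vec A\theta_i$ part is exactly what the inductive step on $p\gets A,\vec A$ delivers once the atoms $\vec B\theta_i$ have produced answers in $\post\cup\{\,!\,\}$.

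For claim (b), consider a successful subderivation of $D$ for an atom $A$ starting at $Q_j=A,\vec A$ and terminating at $Q_m=\vec A\,\theta_{j+1}\cdots\theta_m$. By the invariant, $Q_j$ is well-asserted and $A\in\pre\cup\{\,!\,\}$; if $A={!}$ the claim is immediate, and otherwise I would apply the well-assertedness condition (second implication) to the auxiliary clause $p\gets A$ whose instance $p\gets A\theta_{j+1}\cdots\theta_m$ has an ``empty prefix'' by the end of the subderivation, yielding $A\theta_{j+1}\cdots\theta_m\in\post$. Finally, a standard argument shows that any well-asserted atomic query from $\pre$ is well-asserted in the formal sense, matching the hypothesis of the theorem.

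The main obstacle I expect is the bookkeeping in the non-cut inductive step: one must carefully track how the well-assertedness conditions compose across the concatenation $\vec B\theta_i,\vec A\theta_i$, using simultaneously the well-assertedness of the applied clause variant, the well-assertedness of $Q_{i-1}$, and substitution-closure of $\pre$ and $\post$. This is the technical heart of Apt's argument, and no new difficulty arises from the cut, since ${!}\in\pre\cup\{\,!\,\}\cap(\post\cup\{\,!\,\})$ by convention and the trivial step ${!},\vec A\leadsto\vec A$ with $\theta_i=\epsilon$ preserves well-assertedness automatically.
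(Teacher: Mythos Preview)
The paper does not give its own proof of this theorem; it simply cites Corollaries~8.8 and~8.9 of \cite{Apt-Prolog} and remarks that the extension to the cut is obvious. So your task is really to reconstruct Apt's argument with the cut added, and your overall plan---carry the invariant ``every query reached in the derivation is well-asserted'' through each LD-step---is the standard one and is correct. Your treatment of the cut step and of the ordinary resolution step in the inductive case for~(a) is fine; the bookkeeping you flag is exactly the composition-of-well-assertedness argument in Apt, and nothing new is needed for~$!$.

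There is, however, a genuine gap in your argument for claim~(b). Applying the \emph{second} implication of well-assertedness to the auxiliary clause $p\gets A$ does not yield $A\theta_{j+1}\cdots\theta_m\in\post$: that implication reads ``if $p\in\pre\cup\{p\}$ and $A\tau\in\post\cup\{!\}$ then $p\in\post\cup\{p\}$'', which assumes the very membership you are trying to establish (and concludes only something about $p$). The phrase ``empty prefix'' does not rescue this. What actually works is an induction on the \emph{length} of the successful subderivation for $A$: the first step applies a program clause variant $H\gets\vec B$ with mgu $\theta_{j+1}$; the remainder contains strictly shorter successful subderivations for each body atom $B_k\theta_{j+1}\cdots$, whose computed answers lie in $\post\cup\{!\}$ by the inductive hypothesis; closure of $\post$ under substitution pushes these into the final instance; and then the second implication of well-assertedness of the \emph{actual} clause $H\gets\vec B$ (using $H\theta_{j+1}\cdots\theta_m = A\theta_{j+1}\cdots\theta_m\in\pre$ by closure of $\pre$) gives $A\theta_{j+1}\cdots\theta_m\in\post$. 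This is the step that does the real work in Apt's Corollary~8.9, and it must go through the program clause, not the auxiliary one.
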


   The definition of c-s-correctness involves only atomic initial queries.
For general queries, consider
 a c-s-specification $\pre,\post$, a program $P$, and a query $Q$.
If $P$ is c-s-correct, and $Q$ is well-asserted then 
in each LD-derivation $D$ for $P$
and $Q$ each selected atom is in $\pre\cup\{\,!\,\}$ and each atomic computed
answer (of a subderivation) is in   $\post\cup\{\,!\,\}$. 
More generally, each atom of a computed answer of a subderivation of $D$
is in $\post\cup\{\,!\,\}$, as \post is closed under substitution.

\section{Completeness in the presence of the cut}
\label{sec:compl.cut}
This section introduces a sufficient condition for
completeness of pruned LD-trees.
The main result is preceded by some necessary definitions.

\begin{df}
Let $T$ be an LD-tree, or a pruned LD-tree, and $Q$ be its root.
An {\em answer} of $T$ is the computed answer of a
successful LD-derivation which is a branch of $T$.

The tree $T$ 
is {\bf complete} w.r.t. a specification $S\subseteq\HB$ if, for any ground  $Q\theta$,
$S\models Q\theta$ implies that $Q\theta$ is an instance of an answer of $T$.
\end{df}
Informally, $T$ is complete iff it produces all the answers for
its root which are required by $S$.

The next definition is, in a sense, the main part of our sufficient condition
for completeness.  The idea is to require not only that a ground atom $A$ is
covered by a clause $C$, but also that
the tree node introduced by
$C$ cannot be pruned by a cut in a preceding clause.
Moreover, when the cut is present in $C$, say $C= H\gets\vec B_0,!,\vec B_1$,
then $A$ should be produced by $C$ employing an arbitrary answer to the
fragment  $\vec B_0$ of $C$.
To formalize this idea it is necessary to employ a c-s-specification,
to describe the atoms possibly selected in the LD-trees
and the corresponding computed answers.

\pagebreak[3]
\begin{df}
\label{def:c-covered}
Let $S\subseteq\HB$ be a specification, and  $\pre,\post$ a c-s-specification
($\pre,\post\in\TB$).
  A ground atom $A$ is {\bf c-covered} (contextually covered) 
w.r.t.\ $S$ and $\pre,\post$
by a clause $C$ occurring in a program $P$~%
if
\linebreak[3]
    \begin{enumerate}
      \item
        $A$ is covered by $C$ w.r.t.\ $S$, and
    \item
      \label{def:c-covered:cond:previous}
      if $C = H\gets\ldots$
      is preceded in $P$ by a clause $C'= H'\gets\vec A_0,!,\vec A_1$, 
      where both $H,H'$ have the same predicate symbol, 
      and ! does not occur in $\vec A_0$, then
       \begin{itemize}
       \item 
         for any atom $H''\in \pre$  such that $A$ is an instance of $H''$
        \item
          no ground instance $H''\theta$ of $H''$ is covered by 
          $H'\gets\vec A_0$ w.r.t.\ $post\cap\HB$;
       \end{itemize}

    \item
      \label{def:c-covered:cond:cut}
      if $C$ contains the cut, $C= H\gets\vec B_0,!,\vec B_1$, then

    \begin{itemize}
      \item 
        for any instance $H\rho\in \pre$ such that $A$ is an instance of $H\rho$
        (and $\rho$ is as below),

      \item 
        for any ground instance
        $\vec B_0\rho\eta$ such that $\vec B_0\rho\eta\subseteq post\cup\{!\}$
        (and $\eta$ is as below),
\ifthenelse{\boolean{commentsaon}}{\enlargethispage*{\baselineskip}}{}

    \item
      $A$ is covered by $(H \mathop\gets \vec B_1)\rho\eta$ w.r.t.\ $S$,
\nopagebreak
   \end{itemize}
\nopagebreak
where
 $do m(\rho)\subseteq \vars(H)$,
\,$r n g(\rho) \cap \vars(C) \subseteq \vars(H)$, 
 $do m(\rho)\cap r n g(\rho) = \emptyset$,
 and $do m(\eta) = \vars(\vec B_0\rho)$.
    \end{enumerate}
\pagebreak[3]
We say that $A$ is c-covered (w.r.t.\ $S$ and $\pre,\post$)
 by a program $P$ if it is c-covered (w.r.t.\ $S$ and $\pre,\post$)
by a clause from $P$.
Similarly, $S$ is c-covered by $P$ if each atom from $S$ is c-covered by $P$.
\end{df}

Some informal explanation may be useful.
The role of condition \ref{def:c-covered:cond:previous} is to exclude cases
where for query $H''$ the cut in a clause $C'$ preceding $C$ is executed, 
which results in not applying  $C$  for $H''$.
Roughly speaking, the cut in  $C'= H'\gets\vec A_0,!,\vec A_1$
is executed when $\vec A_o$ succeeds.
What we know about the computed answer for $\vec A_0$ obtained at the success
is that each atom of the answer is in \post.
So the cut in $C'$ may be executed if there is an instance
$(H'\gets\vec A_0)\varphi$, its head $H'\varphi$ is an instance of $H''$
and $\vec A_0\varphi\subseteq\post$.  It is sufficient here to consider
only ground instances of $H'\gets\vec A_0$;  such a ground instance exists 
iff a ground instance of $H''$ is covered by  $H'\gets\vec A_0$ w.r.t.\ 
$\post\cap\HB$.
When the cut is executed in clause $C= H\gets\vec B_0,!,\vec B_1$
then only the first answer for $\vec B_0$ will be used.  
The only information we have about this answer
 is that its atoms are in $post\cup\{!\}$.
The role of condition \ref{def:c-covered:cond:cut} is to assure that for each 
such answer clause $C$ can produce $A$.
Tu assure that such answer exists, $C$ is required to cover $A$ w.r.t.\ $S$.

For programs without the cut, c-covered is equivalent to covered.
For multiple occurrences of the cut in a clause,
condition \ref{def:c-covered:cond:previous} considers the first
occurrence of $!$ in $C'$, while
what matters in condition \ref{def:c-covered:cond:cut} 
is the last occurrence of $!$ in $C$
(if the condition holds for the last occurrence then it holds for each
previous one).
The two conditions get simplified when all the atoms of $\pre$ are ground,
as then $H''=A=H''\theta$ and $H\rho = A$.  
In a general case,
checking that an atom is c-covered by a clause can be simplified as follows:

\pagebreak[3]
\begin{remark}
\label{remark:conditions}
Note that in condition \ref{def:c-covered:cond:previous}
of Def.\,\ref{def:c-covered},
instead of considering all atoms $H''\in\pre$,
it is sufficient to
consider maximally general atoms $H''\in\pre$ unifiable with $H'$
(and having $A$ as an instance).
\nopagebreak

Similarly,
by Proposition \ref{lemma:instances:c-covered:tough:condition} in the Appendix,
instead of considering all instances $H\rho\in\pre$ of $H$
in condition \ref{def:c-covered:cond:cut}, it is sufficient to consider
maximally general instances $H\rho\in\pre$.
\nopagebreak

Assume that $S\subseteq \post$.
Then, for $A$ to be is covered by a clause $C= H\gets\vec B_0,!,\vec B_1$
(w.r.t.\ $S$),
it is sufficient that $A$ is covered by $H\gets\vec B_0$ (w.r.t.\ $S$).
The former is implied by the latter and condition \ref{def:c-covered:cond:cut}.

\end{remark}

The core of the proposed method of proving completeness is the following
sufficient condition.

\begin{theorem}
[completeness]
\label{th:complete}
Consider an LD-tree $T$ for a program $P$, and the tree $pruned(T)$.
Let $Q$ be the root of $T$.  Assume that $Q$ does not contain !.
Let $S\subseteq\HB$ be a specification, and $\pre,\post$
a c-s-specification such that $S\subseteq \post$.
Let
\begin{itemize}
\item
  $pruned(T)$ be finite, $P$ be c-s-correct w.r.t.\ $pre,post$,
  $Q$ be a well asserted query w.r.t.\ $pre,post$, and
\item
each $A\in S$ be c-covered w.r.t.\ $S, pre, post$ by a clause of $P$.
\end{itemize}
Then $pruned(T)$ is complete w.r.t.\ $S$.  
\end{theorem}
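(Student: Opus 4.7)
The plan is to strengthen the statement and proceed by induction on the size of the subtree of $pruned(T)$ rooted at the node under consideration (this is a well-founded measure because $pruned(T)$ is finite). The strengthening: for every node $Q'$ of $pruned(T)$ and every ground substitution $\sigma$ with $do m(\sigma)\supseteq\vars(Q')$ such that every atom of $Q'\sigma$ is in $S\cup\{\,!\,\}$, some branch of $pruned(T)$ starting at $Q'$ is successful and its computed answer has $Q'\sigma$ as an instance. The theorem follows by taking $Q'=Q$: if $S\models Q\theta$ is ground then every atom of $Q\theta$ is in $S$, because $S$ is a set of ground atoms.

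For the inductive step, write $Q'=B_1,\vec B$. If $B_1=\,!$, the unique successor of $Q'$ in $pruned(T)$ is a strictly smaller subtree and the induction hypothesis applies directly. Otherwise, by c-s-correctness of $P$ together with the well-assertedness assumptions (which, by Theorem \ref{th:c-s-correctness}, propagate to every selected atom of every LD-derivation starting in a well-asserted query), we have $B_1\in\pre$; and by hypothesis $B_1\sigma\in S$. Then by c-coveredness there is a clause $C$ of $P$ that c-covers $B_1\sigma$. The first task is to check that the child of $Q'$ obtained by applying $C$ actually appears in $pruned(T)$. The only way that child could have been pruned is through the cut of a clause $C'=H'\gets\vec A_0,!,\vec A_1$ of the same procedure that precedes $C$ in $P$: such a pruning would require a successful subderivation for $\vec A_0$ from an instance of $B_1$, and by c-s-correctness its computed answer would consist of atoms of $\post\cup\{\,!\,\}$. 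A ground instance of this answer would then exhibit a ground instance of $B_1$ covered by $H'\gets\vec A_0$ w.r.t.\ $\post\cap\HB$, contradicting condition 2 of c-coveredness (taking $H''=B_1$). Hence the $C$-branch is indeed preserved.

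If $C=H\gets\vec B_C$ contains no cut, let $\tau$ be the (idempotent, relevant) mgu used; the $C$-child is $(\vec B_C,\vec B)\tau$. By condition 1 (coveredness), extend $\tau$ to a ground substitution $\sigma''$ so that every atom of $(\vec B_C,\vec B)\tau\sigma''$ is in $S\cup\{\,!\,\}$ and $Q'\tau\sigma''=Q'\sigma$. Apply the induction hypothesis to the child and lift the resulting successful branch to $Q'$ by prefixing with $\tau$. If $C=H\gets\vec B_0,!,\vec B_1$ contains a cut, apply the induction hypothesis first to the child node $(\vec B_0,!,\vec B_1,\vec B)\tau$ with a ground instantiation of $\vec B_0\tau$ obtained from covering (its atoms lie in $S\cup\{\,!\,\}$), thus producing a successful subderivation for $\vec B_0\tau$ inside $pruned(T)$. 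By c-s-correctness the atoms of its computed answer lie in $\post\cup\{\,!\,\}$, so a ground instantiation of them lies in $\post$. Feeding these as $\rho$ and $\eta$ into condition 3 of c-coveredness yields that $B_1\sigma$ is covered by $(H\gets \vec B_1)\rho\eta$ w.r.t.\ $S$; by Lemma \ref{lemma:subderivation-variable} the subderivation did not touch the variables of $\vec B\tau$, so after crossing the cut the next node is of the form $(\vec B_1,\vec B)\tau\cdot(\text{composition})$ and admits a ground instance with atoms in $S\cup\{\,!\,\}$. A second application of the induction hypothesis (to a strictly smaller subtree, namely the one rooted at the node immediately after the cut) produces the remainder of the desired successful branch; the three pieces (the $C$-application step, the sub-derivation for $\vec B_0\tau$, and the continuation past the cut) concatenate into one branch of $pruned(T)$, because pruning triggered along a chosen path only removes nodes to the right of that path.

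The main obstacle is the cut subcase. Two delicate points must be handled: (i) the sub-derivation for $\vec B_0\tau$ chosen inductively is not the ``canonical'' one suggested by the covering witness for $B_1\sigma$, so its answer is not the one we wanted, which is precisely why condition 3 of c-coveredness is formulated to accommodate \emph{any} answer in $\post\cup\{\,!\,\}$; and (ii) one must verify that the three concatenated segments indeed survive in $pruned(T)$, i.e.\ that neither the cut inside $C$ nor any other cut fired earlier in $seq(\cdot)$ along the path prunes a node of the branch we are constructing. Both points rely on Lemma \ref{lemma:subderivation-variable} to control variable sharing and on the iterative definition of $pruned(T)$, which only prunes strictly to the right of each cutting sequence.
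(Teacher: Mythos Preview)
Your strengthened invariant is too strong: it is \emph{false} at nodes that sit in the middle of a cutting sequence, so the induction cannot go through. Concretely, take
\[
P:\quad p(Y)\gets q(X),\,!,\,r(X,Y).\qquad q(1).\quad q(2).\quad r(1,a).\quad r(2,a).
\]
with $S=\{p(a),q(1),q(2),r(1,a),r(2,a)\}$, $\pre=\TB$, $\post=S$. All premises of the theorem hold (in particular condition~\ref{def:c-covered:cond:cut} for $p(a)$ is satisfied because both $r(1,a)$ and $r(2,a)$ lie in $S$). In $pruned(T)$ for the root $p(Y)$, the node $Q'=q(X),!,r(X,Y)$ is present, but its only surviving branch goes through $q(1)$ and yields the answer $q(1),!,r(1,a)$. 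Your invariant at $Q'$ with $\sigma=\{X/2,Y/a\}$ would require a branch whose answer has $q(2),!,r(2,a)$ as an instance; there is none. So the statement you are trying to prove by induction is simply not true at that node.

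The error in your inductive step that lets this slip through is the sentence ``the only way that child could have been pruned is through the cut of a clause $C'$ \ldots\ that precedes $C$ in $P$''. That is not the only way: an \emph{ancestral} cutting sequence (one whose introducing node is a proper ancestor of $Q'$) can also remove children of $Q'$, and condition~\ref{def:c-covered:cond:previous} of Def.\,\ref{def:c-covered} says nothing about those. In the example above, $q\in S$ is c-covered by the clause $q(2)$, yet the $q(2)$-child of $Q'$ is pruned by the cut originating at $p(Y)$. Your use of the induction hypothesis in the cut subcase is also garbled---applying it to the child $(\vec B_0,!,\vec B_1,\vec B)\tau$ yields a full successful branch, not a subderivation for $\vec B_0\tau$---but that is secondary to the structural problem.

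The paper's proof avoids this by never landing in the interior of a cutting sequence. It maintains the invariant that the current node, if it belongs to any cutting sequence of $pruned(T)$, is its \emph{introducing} node (condition~(\ref{node.condition}) in Lemma~\ref{lemma:main}). From such a node one either takes a single step (when the chosen c-covering clause precedes, or equals without cut, the clause of the cutting sequence), or one jumps across the entire cutting sequence in one go (Lemma~\ref{lemma:sequence}), landing at the child of the executing node. Condition~\ref{def:c-covered:cond:cut} is used exactly once, at the moment of this jump, to manufacture a ground instance of the post-cut node whose atoms lie in $S$; the fact that the answer for $\vec B_0$ actually obtained in $pruned(T)$ need not be the one suggested by coveredness is precisely what condition~\ref{def:c-covered:cond:cut} quantifies over. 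Your plan contains this last insight, but without the restriction to introducing nodes the surrounding induction does not stand.
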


The proof is presented in the Appendix.
The next section contains example completeness proofs employing this theorem.

\paragraph{Additional comments.}
These remarks may be skipped at the first reading.

To deal with an initial query $Q$ containing the cut,
one may add a clause $p(\vec V)\gets Q$ to the program (where $\vec V$ are
the variables of $Q$, and $p$ is a new symbol),
and extend the specifications appropriately.
Specification $S$ should be extended to 
$ S' = S\cup
\{\,
 p(\vec V)\theta\in\HB \mid  Q\theta \mbox{ is ground, }
                             Q\theta\subseteq S\cup\{!\}
\,\}
$,
and all the $p$-atoms should be added to \pre and to \post.
Then Theorem \ref{th:complete} is applicable to the extended program.
(Note that
clause $p(\vec V)\gets Q$ covers each $p$-atom of $S'$ w.r.t. $S'$, 
by the definition
of $S'$, and that condition \ref{def:c-covered:cond:previous} of 
Def.\,\ref{def:c-covered} vacuously holds for  $p(\vec V)\gets Q$.
Note also that  $p(\vec V)\gets Q$ satisfies the sufficient condition for
c-s-correctness, i.e.\ is well-asserted w.r.t.\ the new c-s-specification,
as $Q$ is well-asserted w.r.t.\ $\pre,\post$.)
We skip further details.

The theorem is inapplicable to infinite pruned trees.  This restriction is
not easy to overcome: the proof of the theorem is based on constructing a
non-failing branch of the tree, the branch -- if finite -- provides the
required answer for $Q$.

\pagebreak[3]
We would like to note a technical detail:
 Def.\,\ref{def:c-covered} actually refers to
$\post\cap\HB$, not to the whole \post.

A version of  Def.\,\ref{def:c-covered} is possible,
which instead of \post employs 
a specification $S^+\in\HB$ w.r.t.\ which the program is correct;
\post is replaced by $S^+$
in conditions \ref{def:c-covered:cond:previous}, \ref{def:c-covered:cond:cut}.
(So in this version, clauses are c-covered w.r.t.\ $S$, \pre and $S^+$.)
We state without proof that
Theorem \ref{th:complete} (with obvious modifications) also holds
for such modified notion of c-covered.
(The modifications are: requiring that $P$ is correct w.r.t.\ $S^+$,
and c-s-correct w.r.t.\ $\pre,\post'$, for some $\post'$;
all other occurrences of \post are replaced by $S^+$.)
In this way a c-s-specification is used only to describe the form of atoms
selected in the derivations, and the specification $S^+$ describes 
the obtained computed answers.
We expect that such separation may be convenient in some cases.

\section{Examples}
\label{sec:examples}
This section presents three example proofs of completeness of pruned trees.
The first one considers a case where various branches produce the same answer
and some of them are pruned.
The second is a rather artificial example, to illustrate some details of 
Def.\,\ref{def:c-covered}.  
In the third example we prove that the usual way of programming negation as 
failure in Prolog correctly implements negation as finite failure for ground
queries.
\begin{example}
\label{ex:in}
Consider a program  IN:
\[
\begin{array}{l}
    \begin{array}[b]{l}
      in([\,],L). \\
      in([E|T],L) \,\pgets\, m(E,L),\, !,\, in( T, L).
    \end{array}
    \qquad
    \begin{array}[b]{l}
      m(E,[E|L]). \\
      m(E,[H|L]) \,\pgets\, m(E,L).
    \end{array}
\end{array}
\]
In the program, a single answer for $m(E,L)$ is sufficient (to obtain the
required answer for a ground $in$-atom).  So the cut is used to prune 
further answers for $m(E,L)$. 
Consider specifications
\[
\begin{array}{l}
 S = S_m\cup S_{in},\ \  pre=pre_m\cup pre_{in}, \ \ post = \TB,
 \mbox{ \ where}
\\[.5ex]
  S_m = \{\, m(t_i, [\seq t] )\in\HB \mid 1\leq i\leq n \,\},
  \\
  S_{in} = \{\, in([\seq[m]u], [\seq t] )\in\HB  \mid m,n\geq0,\ 
            \{\seq[m]u\}\subseteq\{\seq t\} \,\},
\\
pre_{m}= \{\, m(u,t)\in\TB \mid t \mbox{ is a list} \,\},
\\
pre_{in}= \{\, in(u,t)\in\HB \mid u,t \mbox{ are ground lists} \,\}.
\end{array}
\]
The program is c-s-correct w.r.t.\  $\pre,\post$ 
(by Th.\,\ref{th:c-s-correctness}, we skip rather simple details).
We show that each atom 
 $A = in( u,t )\in S_{in}$, where $u=[\seq[m]u]$, $m>0$,
 is c-covered 
by the second clause $C$ of IN.  
Note first that $A$ is covered by $C$, due to its instance 
$in([u_1|[\SEQ u 2 m]], t) \pgets m(u_1,t), !, in( [\SEQ u 2 m], t)
$;
its body atoms are in $S$, as each $u_i$ is a member of $t$.
Condition \ref{def:c-covered:cond:previous} of Def.\,\ref{def:c-covered} holds,
as $H''=A$ and $H''$ is not unifiable with $in([\,],L)$.

To check condition \ref{def:c-covered:cond:cut},
take an instance $in([E|T],L)\rho\in \pre$ of the head of $C$.
The instance is ground, and the whole $C\rho$ is ground.  
So in Def.\,\ref{def:c-covered}, $\rho\eta=\rho$.
If  $A$ is an instance of (thus equal to) $in([E|T],L)\rho$
then $in(T,L)\rho = in([u_2,\ldots,u_m],t)\in S$ (as $A\in S$).
Thus $A$ is covered by $(in([E|T],L)\pgets in(T,L))\rho\eta$.
So condition \ref{def:c-covered:cond:cut} of Def.\,\ref{def:c-covered} holds.
Thus $A$ is c-covered by $C$.  It is easy to check that 
all the remaining atoms of $S$ are covered and c-covered w.r.t.\ $S$
by the remaining clauses of IN.

Note that program IN is recurrent under the level mapping
 $|m(s,t)|=|t|$, $|in(s,t)|=|s|+|t|$, 
where
 $  |\, [h|t]\, | = 1+|t| $ and 
 $  |f(\seq t)| = 0 $
 (for any ground terms $h,t,\seq t$, and any function symbol $f$ 
  distinct from $[\ | \ ]$\,).
Thus each LD-tree for IN and a query $Q\in \pre$ is finite.

By Th.\,\ref{th:complete}, for each $Q\in \pre$ the pruned LD-tree  is
complete w.r.t.\ $S$.
Notice that condition \ref{def:c-covered:cond:cut} may not hold
when non ground arguments of $in$ are allowed in $pre_{in}$,
and that for such queries the pruned LD-trees may be not complete w.r.t.\ $S$.
As an example take $H'' = in([X],[1,2])$ and $A=in([2],[1,2])$.
\end{example}

The previous example illustrates a practical case of so called ``green cut''
\cite{Sterling-Shapiro-short},
where (for certain queries) pruning does not remove any answers.
However it represents a rather simple application of  Th.\,\ref{th:complete},
with an easy check for condition
 \ref{def:c-covered:cond:previous} of Def.\,\ref{def:c-covered},
and condition \ref{def:c-covered:cond:cut} applied only to
ground atoms from \pre.  
The next two examples illustrate more sophisticated cases of
conditions \ref{def:c-covered:cond:previous},\,\ref{def:c-covered:cond:cut}.

\begin{example}
\label{ex:artificial}
Consider a program $P$:
\[
\begin{array}[t]{l}
  p(X,Y) \,\pgets\, q(X,Y),\, r(X,Y),\, !. \\
  p(X,Z) \,\pgets\, q(X,Y),\, !,\, r(Y,Z).
\end{array}
\qquad\quad
\begin{array}[t]{l}
  q(a,a) \\ q(a,a') \\ q(b,b) 
\end{array}
\qquad\quad
\begin{array}[t]{l}
  r(a,c) \\   r(a',c) 
\end{array}
\]
\vspace{-1.\belowdisplayskip}
and specifications
\[
\begin{array}{l}
    S=\{\,p(a,c),  q(a,a'),   r(a,c),   r(a',c) \,\}, \\
    post = S\cup\{q(a,a)\}, \\
    pre = \{\, p(a,t) \mid t\in\TU \,\} \cup
     \{\, q(a,t) \mid t\in\TU \,\} \cup
     \{\, r(t,u) \mid t,u\in\TU \,\}.
\end{array}
\]
The program is c-s-correct w.r.t.\ $pre,post$, by Th.\,\ref{th:c-s-correctness}.%
\footnote{%
  Note that $S$ does not require $q(a,a)$ or $q(b,b)$ to be computed,
   and that $P$ is not correct w.r.t.\ $S$, cf.\ 
  ``Approximate specifications'' in Section \ref{sec:compl.decl}.
  Note also a usual situation:
  even if we are interested in completeness w.r.t.\ $\{p(a,c)\}$,
   some $q$- and $r$-atoms are to be present in $S$ in order to facilitate
  the proof.
}
We show that
atom  $A = p(a,c)\in S$ is c-covered by the second clause of $P$.
Note first that $A$ is covered by the clause w.r.t.\ $S$ due to its instance
$p(a,c)\pgets q(a,a'),{!}, r(a',c)$.%
\footnote{%
    Note also that each atom of $S$ is covered by $P$ w.r.t.\ $S$, and that $P$
    is recurrent.  Thus $P$ is complete w.r.t.\ $S$
    (by Th.\,\ref{th:semi-complete} and the remark following
    Th.\,\ref{th:termination}).
}

For condition \ref{def:c-covered:cond:previous} of Def.\,\ref{def:c-covered}
it is sufficient to consider $H''=p(a,X)$,
by Remark \ref{remark:conditions}.
No ground instance $p(a,s)$ of $H''$ ($s\in\HU$)
is covered by $p(X,Y) \pgets q(X,Y), r(X,Y)$ w.r.t.\ $\post\cap\HB$,
as in no ground instance of $q(X,Y), r(X,Y)$ both atoms are in $\post$.
So condition \ref{def:c-covered:cond:previous} holds.

By Remark \ref{remark:conditions},
it is sufficient to check 
condition \ref{def:c-covered:cond:cut} of Def.\,\ref{def:c-covered}
for $\rho=\{X/a\}$,
as $p(X,Z)\rho=p(a,Z)$ is a most general $p$-atom in $pre$.
If  $q(X,Y)\rho\eta \in post$ (and $do m(\eta) = \vars(q(X,Y)\rho) = \{Y\}$)
then $\eta=\{Y/a\}$ or $\eta=\{Y/a'\}$.
Hence $r(Y,Z)\rho\eta$ is $r(a,Z)$ or $r(a',Z)$.
In both cases, 
$p(a,c)\pgets r(Y\eta,c)$ is a ground instance of
$(p(X,Z)\pgets r(Y,Z))\rho\eta$ 
(i.e. of $p(a,Z)\pgets r(Y\eta,Z)$) 
covering $p(a,c)$ w.r.t.\ $S$.
So condition \ref{def:c-covered:cond:cut} holds, and $p(a,c)$ is covered by
$P$. 

The remaining atoms of $S$ are trivially c-covered by the unary clauses of $P$.
For any $A\in\pre$, 
the LD-tree for $P$ and $A$ is finite, hence the pruned LD-tree
is complete w.r.t.\ $S$ by Th.\,\ref{th:complete}.
Note that a nontrivial \post was necessary here.
With \post being \HB both conditions  \ref{def:c-covered:cond:previous},
 \ref{def:c-covered:cond:cut} do not hold.

\end{example}

\begin{example}
\label{ex:naf}
\newcommand{\mynotp}{\ensuremath{{\it not p}}\xspace}
\newcommand{\myextrahspace}{\hspace{0pt plus 5pt}}
In this example the cut is used to implement negation as finite failure.
Consider a program $P_0$ without the cut.
Assume that $P_0$ is c-s-correct w.r.t.\ a specification
$\pre_0,\post_0$, and that predicate symbols \mynotp, ${\it fail}$ do not
occur in $P_0,\pre_0,\post_0$.
Let $p$ be a unary predicate symbol.
Let $P$ be $P_0$ with the following clauses added:
\[
\mynotp(X) \,\pgets\, p(X),\, {!},\, {\it fail}.  \qquad \qquad
\mynotp(X).
\]
Let $\pre_\mynotp = \{\,\mynotp(t)\in\HB \mid  p(t)\in\pre_0 \,\}$.
Program $P$ is c-s-correct w.r.t.\ the c-s-specification $\pre,\post$, where
$\pre = \pre_0\cup \pre_\mynotp\cup\{{\it fail}\}$,
and $\post = \post_0\cup \pre_\mynotp$.
We show that $\mynotp(t)$ succeeds for those ground $t$ for which $p(t)$ is
known to finitely fail
(i.e.\ the LD-tree for $p(t)$ and $P_0$ is finite, and
$p(t)\not\in\post_0$, hence  $p(t)$ does not succeed).
Formally, the property to be proven is that the finite pruned LD-trees for
$P$ are complete w.r.t.\ a specification
$S = \{\,\mynotp(t) \in\pre_\mynotp \mid  p(t)\not\in\post_0 \,\}\cup\M_{P_0}$.

Take an atom $A = \mynotp(t) \in S$.  To show that $A$ is c-covered
w.r.t.\ $S, \pre, \post$  by
clause $\mynotp(X)$ of $P$,
condition \ref{def:c-covered:cond:previous} of Def.\,\ref{def:c-covered} and
the clause
\myextrahspace%
 $C'=\mynotp(X) \pgets p(X),\, {!},\, {\it fail}$
\myextrahspace%
 has to be considered.
Using the notation of Def.\,\ref{def:c-covered},
\,\mbox{$H''= A$}, \,$H''$ is ground, and $H''$ is not covered by 
$\mynotp(X) \pgets p(X)$ w.r.t.\ $\post\cap\HB$ (as $p(t)\not\in\post$).
Thus condition \ref{def:c-covered:cond:previous} holds and $A$ is c-covered
by a clause of $P$.  
The remaining atoms of $S$ (those from $\M_{P_0}$) are obviously covered by 
the clauses of $P_0$, as no cut occurs in $P_0$, and each atom of 
$\M_{P_0}$ is covered by $P_0$ w.r.t.\ $\M_{P_0}$.
By Th.\,\ref{th:complete} each finite pruned LD-tree for $P$ 
with an atomic root from \pre is complete w.r.t.\ $S$. 

Note that condition \ref{def:c-covered:cond:previous} may not hold when
\pre contains non-ground $\mynotp$-atoms,
and that for such query the pruned LD-tree may be not complete w.r.t.\ $S$.
(Assume that \pre contains an atom $B=\mynotp(u)$ such that (i)~$p(u)$ succeeds,
 but 
(ii)~$p(u\sigma)\not\in\post$  for some ground instance $u\sigma\in\HU$ of $u$.
 Hence $B$ fails, by~(i), and  $B\sigma\in S$, by~(ii).
 So the pruned LD-tree for $B$ is not complete w.r.t.\ $S$.  On the other hand, 
 $B\sigma$ is not c-covered by $P$ w.r.t.\ $S, \pre, \post$, as
 condition \ref{def:c-covered:cond:previous} is violated:
   Take $H''=B$,
   and a ground instance $p(u\theta)\in\HB$ of the answer for $p(u)$.
   Thus $p(u\theta)\in\post\cap\HB$.   
   So a ground instance $\mynotp(u\theta)$ of $H''$ is covered by
   $\mynotp(X) \pgets p(X)$ w.r.t.\ $\post\cap\HB$, which is forbidden 
   by condition \ref{def:c-covered:cond:previous}.)

\end{example}

\section{Conclusion}

This paper introduces a sufficient condition for completeness of Prolog
programs with the cut.  The syntax is formalized as definite clause programs
with the cut.  The operational semantics is formalized in two steps:
LD-resolution, and pruning LD-trees.
The sufficient condition is illustrated by example completeness proofs.

\paragraph{Acknowledgement.}
Thanks are due to Paul Tarau for noticing an incorrect claim in the 
fist version of this work.
The author thanks anonymous reviewers for the pinpointed errors and for
suggestions concerning the presentation.

%

%

%

%


%
\appendix
\section{Appendix}
The appendix contains a proof of Theorem \ref {th:complete}.
It also introduces a proposition
 (\ref{lemma:instances:c-covered:tough:condition})
employed in Remark \ref{remark:conditions} (to simplify checking 
condition  \ref{def:c-covered:cond:cut} of Def.\,\ref{def:c-covered}).
The proof employs the notions of unrestricted derivation and lift from 
\cite[Definitions 5.9, 5.35]{Doets}.
We first present their definitions and the main related technical result,
adjusted to the Prolog selection rule
and to the difference (explained in Section \ref{sec:LD})
between the standard notion of 
SLD-derivation adopted here and the notion of derivation of \cite{Doets}.

\begin{df}
An {\bf unrestricted LD-derivation} for a program $P$
is a (finite or infinite) sequence $Q_0,Q_1,\ldots$ of
queries, together with a sequence $\theta_1,\theta_2,\ldots$ of substitutions
(called {\em specializations}) and a sequence $C_1,C_2,\ldots$ of clauses
from $P$, such that for each $Q_i$ ($i\neq0$):
{\sloppy\par}
\begin{itemize}
\item 
when $Q_{i-1}= {!}, \vec A$ then $Q_{i}= \vec A$ and $\theta_i=\epsilon$,

\item
otherwise, when
$Q_{i-1}= A, \vec A$ then $Q_{i}= \vec B,\vec A\theta_i$, where
$do m(\theta_i)\subseteq \vars(Q_{i-1})$ and 
$A\theta_i\gets\vec B$ is an
 instance of $C_i$.

\end{itemize}
An unrestricted LD-derivation is {\bf successful} if its last query is empty.
The {\bf answer} of such successful derivation 
$Q_0,\ldots,Q_n;\theta_1,\ldots,\theta_n;C_1,\ldots,C_n$
is $Q_0\theta_1\cdots\theta_n$.
\end{df}

So, informally, the difference between derivations and unrestricted
derivations is that the latter employ clause instances which may be not most
general ones.
A technical difference is that a most general unifier in a derivation applies
both to the variables of a query and those of a clause (variant), 
while in a restricted derivation a specialization applies only to the
variables of a query.

When presenting (unrestricted) derivations,
we sometimes skip the sequence of clauses, or the sequence of substitutions.
Note that if 
$Q_0,Q_1,\ldots;\theta_1,\theta_2,\ldots$ is an LD-derivation then
$Q_0,Q_1,\ldots;\restrict{\theta_1}{Q_0},\restrict{\theta_2}{Q_1},\ldots$ 
(together with a suitable sequence of clause instances) 
is an unrestricted LD-derivation.
We say that the latter is the unrestricted LD-derivation {\em corresponding}
to the former. 
Assume that both derivations are successful.  Then both have the same answer:
$Q_0\theta_1\cdots\theta_n = 
Q_0(\restrict{\theta_1}{Q_0})\cdots(\restrict{\theta_n}{Q_{n-1}})$.

\pagebreak[3]
\begin{df}[lift]
An LD-derivation $D = Q_0,Q_1,\ldots;\theta_1,\theta_2,\ldots$
is a {\bf lift} of an unrestricted LD-derivation
$E = R_0,R_1,\ldots;\alpha_1,\alpha_2,\ldots; C_1,C_2,\ldots$ when 
\begin{itemize}
\item
$R_0$ is an instance of $Q_0$,
\item 
$D,E$ are of the same length,
\item
to each $Q_{i-1}$ a variant of a clause $C_i$ has been applied in $D$.
\end{itemize}
\end{df}

Our proof refers to the lifting theorem in the form of \cite[Th.\,5.37]{Doets}.
The full power of the theorem is not needed here, the following corollary is sufficient.

\begin{corollary}[lifting]
\label{cor:lifting}

Every unrestricted LD-derivation $E$ starting from a query $R_0$,
which is an instance of $Q_0$, 
has a lift $D$ starting from $Q_0$.

If an LD-derivation $D = Q_0,Q_1,\ldots$ is a lift of 
an unrestricted LD-derivation $E = R_0,R_1,\ldots$ then
each $R_i$ is an instance of $Q_i$.
Hence $D$ is successful iff $E$ is successful.

If an LD-derivation $D = Q_0,Q_1,\ldots;\theta_1,\theta_2,\ldots$
is a lift of an unrestricted LD-derivation
$E = R_0,R_1,\ldots;\linebreak[3]\alpha_1,\alpha_2,\ldots; C_1,C_2,\ldots$ then
$R_i\alpha_{i+1}\cdots\alpha_j$ is an instance of 
$Q_i\theta_{i+1}\cdots\theta_j$
(for any $i<j$ such that $\SEQ R i j$ are queries of $D$).
In particular, if $E$ is successful then the answer of $E$ is an instance of
the answer of $D$.

\end{corollary}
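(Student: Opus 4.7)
The plan is to prove all three parts simultaneously by induction on derivation length, maintaining a single invariant that encodes both the existence claim and the instance-of-answer claim. Specifically, I will show that the lift can be constructed so that there is a sequence of substitutions $\gamma_0,\gamma_1,\ldots$ with $R_i = Q_i\gamma_i$ for every $i$, and moreover $\gamma_{i-1}\alpha_i$ and $\theta_i\gamma_i$ agree on the variables that matter (the variables of $Q_{i-1}$ together with those of the clause variant applied at step $i$). Given this invariant, the three claims of the corollary fall out:  part~1 from the construction itself, part~2 as the statement of the invariant (so successfulness is preserved in both directions because the empty query is an instance only of itself), and part~3 by iterating the equation $\gamma_{i-1}\alpha_i = \theta_i\gamma_i$ (on the relevant variables) to obtain $R_i\alpha_{i+1}\cdots\alpha_j = Q_i\theta_{i+1}\cdots\theta_j\gamma_j$.

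The base case $i=0$ is immediate: $\gamma_0$ is any substitution witnessing $R_0$ as an instance of $Q_0$. For the inductive step I would split on the form of $Q_{i-1}$. The cut case ($Q_{i-1} = !, \vec A$) is essentially a no-op: both $D$ and $E$ are forced to take identity substitution and drop the leading $!$, and since the invariant gives $R_{i-1} = !, \vec A\gamma_{i-1}$, setting $\gamma_i := \gamma_{i-1}$ preserves everything. The interesting case is resolution. Here $Q_{i-1} = A,\vec A$ and $E$ uses an instance $A\gamma_{i-1}\alpha_i \gets \vec B$ of $C_i$ to step to $R_i = \vec B, \vec A\gamma_{i-1}\alpha_i$. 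To build the lifting step I choose a standardized-apart variant $H\gets \vec B'$ of $C_i$, observe that the combination of $\gamma_{i-1}$, $\alpha_i$, and a substitution mapping $\vec B'$ to $\vec B$ yields a unifier of $A$ and $H$, and therefore an mgu $\theta_i$ exists. The universal property of mgu's then produces the required $\gamma_i$ as a factor of this unifier, yielding $Q_i\gamma_i = R_i$ and the compatibility equation connecting $\gamma_{i-1}\alpha_i$ with $\theta_i\gamma_i$.

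The main obstacle will be the bookkeeping of variables and substitution domains, which differs subtly from Doets's setup: our mgu $\theta_i$ acts on both query and clause variables, while Doets's specializations act only on query variables, and our clauses are standardized apart whereas in unrestricted derivations arbitrary clause instances are used. Care is needed to ensure that the standardization-apart choice for $D$ is compatible with both the already-built prefix of $D$ and with the accumulated $\gamma_{i-1}$, so that the composition $\gamma_{i-1}\cup(\text{clause instance substitution})$ is well-defined as a unifier. Once the variable disjointness is set up properly --- using the freedom to choose fresh variables in the standardized variant to avoid $\mathrm{vars}(\gamma_{i-1})$ in addition to the usual forbidden sets --- the mgu property delivers $\gamma_i$ with $\theta_i\gamma_i$ agreeing with $\gamma_{i-1}\alpha_i$ on $\mathrm{vars}(Q_{i-1})\cup\mathrm{vars}(H\gets\vec B')$, which is all that is needed to extend the invariant.

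Finally, to derive part~3 in the form stated, I iterate: $R_j = Q_j\gamma_j$ and, by induction on $j-i$ using the step-wise compatibility, $R_i\alpha_{i+1}\cdots\alpha_j = Q_i\theta_{i+1}\cdots\theta_j\gamma_j$, hence the former is an instance of the latter. The ``answer'' statement is the special case $i=0$, $j=n$. The if-and-only-if in part~2 combines the forward direction (emptiness of $Q_n$ forces emptiness of $R_n=Q_n\gamma_n$) with the backward direction (if $R_n$ is empty and $R_n=Q_n\gamma_n$, then $Q_n$ is empty as well, since substitution cannot shorten a nonempty query).
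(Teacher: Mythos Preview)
The paper does not give its own proof of this corollary: it simply records it as a consequence of the lifting theorem of Doets (Th.\ 5.37 there), adapted to the present conventions on LD-derivations. Your proposal instead supplies a direct inductive proof, which is essentially the standard argument behind the lifting theorem itself. That is a legitimate route and makes the development self-contained; it is more work than the paper does, but not different in spirit from what the cited reference contains.

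One point deserves attention. Parts 2 and 3 of the corollary are stated for an \emph{arbitrary} lift $D$, not only for the lift you construct in part~1. Your plan builds the witnesses $\gamma_i$ together with the construction of $D$ (``the lift can be constructed so that \ldots''), and your remark about ``using the freedom to choose fresh variables in the standardized variant to avoid $\vars(\gamma_{i-1})$'' only makes sense when you are free to choose that variant. For a \emph{given} $D$ the clause variant at step $i$ is already fixed. Fortunately that extra freedom is not actually needed: since the variant $H\gets\vec B'$ is standardized apart from $Q_{i-1}$, the substitution that agrees with $\gamma_{i-1}\alpha_i$ on $\vars(Q_{i-1})$ and with the clause-instance substitution on $\vars(H\gets\vec B')$ is well defined regardless of $\gamma_{i-1}$, and it unifies $A$ with $H$; the mgu property then produces $\gamma_i$ as you describe. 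So the same induction, run with $D$ fixed rather than under construction, delivers parts 2 and 3 in full generality. Once you separate the two passes (or note explicitly that the verification pass is the construction pass with $D$ given in advance), your argument is complete.
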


We are ready to begin a proof of Th.\,\ref {th:complete}.
It consists of a few lemmas.

\begin{lemma}
\label{lemma:antilifing}
Let $P$ be a program and $D = \SEQ Q 0 n; \seq\theta$ an LD-derivation for $P$.
Consider a substitution $\sigma$,
and the instances  $Q_0' = Q_0\theta_1\cdots\theta_n\sigma$
and $Q_n' = Q_n\sigma$ of $Q_0$ and $Q_n$.
Then there exists an unrestricted derivation 
$D' = \SEQ{Q'} 0 n$ for $P$ such that $D$ is a lift of $D'$.
\end{lemma}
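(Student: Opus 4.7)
The plan is to construct $D'$ directly from $D$ and $\sigma$, defining the intermediate queries as the ``trailing instances'' of those of $D$, and then to verify that each step so obtained satisfies the unrestricted derivation rule (with the same clause $C_i$ that was applied in $D$).

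Concretely, I would set $Q_i' = Q_i\theta_{i+1}\cdots\theta_n\sigma$ for $i=0,\ldots,n$ (interpreting the empty product as $\epsilon$). Then $Q_0' = Q_0\theta_1\cdots\theta_n\sigma$ and $Q_n' = Q_n\sigma$, exactly the two instances singled out in the statement. The candidate derivation is $D' = Q_0',\ldots,Q_n'$, in which step $i$ will reuse the clause $C_i$ employed at step $i$ of $D$ (taking $H_i \gets \vec B_i$ for the standardized-apart variant used in $D$).

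Next I would check, for each $i\in\{1,\ldots,n\}$, that $Q_{i-1}' \leadsto Q_i'$ is a legitimate unrestricted step with $C_i$ and specialization $\alpha_i = \epsilon$. Write $\tau_i = \theta_{i+1}\cdots\theta_n\sigma$. In the cut case $Q_{i-1} = \,!\,,\vec A$ with $\theta_i=\epsilon$, we get $Q_{i-1}' = \,!\,,\vec A\tau_i$ and $Q_i' = \vec A\tau_i$, matching the rule for the cut. In the resolution case, $Q_{i-1} = A,\vec A$ and $D$ applies $H_i \gets \vec B_i$ with mgu $\theta_i$, so $A\theta_i = H_i\theta_i$; thus $Q_{i-1}' = A\theta_i\tau_i,\ \vec A\theta_i\tau_i$ and $Q_i' = \vec B_i\theta_i\tau_i,\ \vec A\theta_i\tau_i$. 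Taking $\alpha_i = \epsilon$ and reading the clause body as $\vec B_i\theta_i\tau_i$, the head--body pair $A\theta_i\tau_i \gets \vec B_i\theta_i\tau_i$ equals $H_i\theta_i\tau_i \gets \vec B_i\theta_i\tau_i$, an instance of $H_i \gets \vec B_i$ and hence of $C_i$. The required shape $Q_i' = \vec B,\vec A\theta_i\tau_i$ is then exactly what we obtain, completing the step.

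Once $D'$ is verified as an unrestricted LD-derivation, the three lift conditions are immediate: $D$ and $D'$ have equal length; $Q_0' = Q_0\theta_1\cdots\theta_n\sigma$ is an instance of $Q_0$; and at each step $D$ applies a variant of the very clause $C_i$ we re-used in $D'$. The only delicate point is the substitution bookkeeping in the resolution case: one must notice that because $\theta_i$ already unifies $A$ with $H_i$ inside $D$, no further specialization beyond $\alpha_i = \epsilon$ is needed in $D'$, and that the fresh variables introduced by the standardized-apart variant simply survive in $Q_i$ and get instantiated by $\tau_i$. This is the place where a careless treatment of variable scopes would obscure the argument; everything else is purely mechanical.
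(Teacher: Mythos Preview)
Your proof is correct and follows essentially the same approach as the paper: you define $Q_i' = Q_i\theta_{i+1}\cdots\theta_n\sigma$ and verify that each step of $D'$ uses an instance $(H_i\gets\vec B_i)\theta_i\cdots\theta_n\sigma$ of the clause applied in $D$. Your write-up is in fact somewhat more explicit than the paper's --- you spell out the specialization $\alpha_i=\epsilon$, treat the cut case separately, and check the three lift conditions --- but the underlying construction is identical.
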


\begin{proof}
The queries of $D'$ are $Q_i'=Q_{i}\theta_{i+1}\cdots\theta_n\sigma$, for
$i=0,\ldots,n$.
Let $Q_{i-1}= A,Q'$ and $Q_i = (\vec B,Q')\theta_i$, where 
$H\gets\vec B$ is a (variant of a) clause of $P$, and
$A \theta_i = H\theta_i$.
Then $Q_{i-1}'= (A,Q')\theta_{i}\cdots\theta_n\sigma$.  Applying an instance 
$(H\gets\vec B)\theta_{i}\cdots\theta_n\sigma$ of the clause,
we obtain  $(\vec B,Q')\theta_{i}\cdots\theta_n\sigma$ which is
$Q_{i}'$.
\end{proof}

\begin{lemma}
\label{lemma:sequence}
Let $S$ be a specification.
Let a program $P$ be c-s-correct w.r.t.\ a call-success
specification \pre,\,\post, and $Q_0$ be a well-asserted query.

Let
$D=\SEQ Q 0 n ; \seq\theta$ be an LD-derivation of $P$, where
$Q_0 = A,Q'$ and $Q_1 = (\vec B_0,!,\vec B_1,Q')\theta_1$
(so the first clause (variant) employed in $D$ is 
$C = H\gets \vec B_0,!,\vec B_1$).

Let
 $\SEQ Q 1 n$ be a successful subderivation of $D$ for $\vec B_0\theta_1$
(so $Q_n = (!,\vec B_1,Q')\theta_1\cdots\theta_n$).

Let
$S\models Q_0\sigma_0$
 for some ground instance  $Q_0\sigma_0$ of $Q_0$,
and  $A\sigma_0$ be c-covered by $C$ w.r.t. $S,\pre,\post$.

Then there exists a ground instance $Q_n\sigma'$ of $Q_n$ such that 
$S\models Q_n\sigma'$.
Moreover, $Q_0\sigma_0$ is the first and 
 $Q_n\sigma'$ is the last query of an unrestricted LD-derivation $D'$ such
that $D$ is a lift of $D'$.

\end{lemma}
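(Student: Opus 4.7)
The natural approach is to reduce the statement to Lemma \ref{lemma:antilifing}: it suffices to construct a substitution $\sigma'$ such that (a) $Q_n\sigma'$ is ground with each atom in $S\cup\{!\}$, and (b) $Q_0\theta_1\cdots\theta_n\sigma' = Q_0\sigma_0$. Given such $\sigma'$, Lemma \ref{lemma:antilifing} directly supplies an unrestricted LD-derivation $D'$ whose first query is $Q_0\sigma_0$, whose last is $Q_n\sigma'$, and of which $D$ is a lift, which is exactly what the statement demands.

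To build $\sigma'$ I would invoke condition \ref{def:c-covered:cond:cut} of Def.\,\ref{def:c-covered} on $A\sigma_0$. First collect the ingredients: by c-s-correctness of $P$ and well-assertedness of $Q_0$, the selected atom satisfies $A\in\pre$, hence (closure of \pre under substitution) $A\theta_1\in\pre$; and the computed answer $\vec B_0\theta_1\cdots\theta_n$ of the subderivation has every atom in $\post\cup\{!\}$. Abbreviate $\psi := \theta_2\cdots\theta_n$. Choose $\rho := \restrict{\theta_1}{\vars(H)}$; using idempotency and relevance of $\theta_1$ together with standardizing apart, $\rho$ meets the syntactic requirements in Def.\,\ref{def:c-covered}, $H\rho = A\theta_1 \in\pre$, and $A\sigma_0$ is an instance of $H\rho$ (being a common ground instance of $A$ and $H$, the latter because $A\sigma_0$ is covered by $C$ via condition 1). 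Pick a ground $\tau$ grounding $\vec B_0\theta_1\psi$, and define $\eta$ on $\vars(\vec B_0\rho)$ by $X\eta := X\psi\tau$; then $\vec B_0\rho\eta = \vec B_0\theta_1\psi\tau$ is ground and in $\post\cup\{!\}$. Condition \ref{def:c-covered:cond:cut} then furnishes a ground $\mu$ with $H\rho\eta\mu = A\sigma_0$ and $\vec B_1\rho\eta\mu \subseteq S\cup\{!\}$.

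With $\tau,\eta,\mu$ in hand, I would define $\sigma'$ on $\vars(Q_n)$ by patching three pieces: agree with $\tau$ on $\vars(\vec B_0\theta_1\psi)$; agree with $\mu$ on variables of $\vec B_1\theta_1$ that do not occur in $\vec B_0\theta_1$; and on the remaining variables (which originate in $Q'$) use $\sigma_0$ pulled back through $\theta_1$. The key technical lever for checking (a) and (b) is Lemma \ref{lemma:subderivation-variable}: variables of $\vec B_1\theta_1$ or $Q'\theta_1$ not occurring in $\vec B_0\theta_1$ are untouched by $\psi$. This makes $\vec B_1\theta_1\psi\sigma'$ coincide with $\vec B_1\rho\eta\mu\subseteq S\cup\{!\}$ and $Q'\theta_1\psi\sigma'$ coincide with $Q'\sigma_0$, whose atoms lie in $S$ by $S\models Q_0\sigma_0$; this yields (a). Likewise $A\theta_1\psi\sigma' = H\rho\eta\mu = A\sigma_0$ by the c-covered conclusion, giving (b).

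The main obstacle will be showing that the three-piece definition of $\sigma'$ is consistent on overlapping variables and then carrying (a) and (b) through rigorously. This bookkeeping rests on (i) standardizing apart, which keeps $\vars(Q_0)$ disjoint from $\vars(C)$, (ii) idempotency and relevance of $\theta_1$, which constrain the domain and range of $\rho$, and crucially (iii) Lemma \ref{lemma:subderivation-variable}, which isolates exactly the variables $\psi$ can touch. Once consistency and (a), (b) are secured, Lemma \ref{lemma:antilifing} closes the argument.
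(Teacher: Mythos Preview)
Your proposal is correct and follows essentially the same route as the paper: set $\rho=\restrict{\theta_1}{H}$, instantiate $\eta$ from a grounding $\tau$ of $\vec B_0\theta_1\cdots\theta_n$, invoke condition~\ref{def:c-covered:cond:cut} of Def.~\ref{def:c-covered} to obtain the covering instance of $(H\gets\vec B_1)\rho\eta$, use Lemma~\ref{lemma:subderivation-variable} for the variable bookkeeping, and close with Lemma~\ref{lemma:antilifing}. The only noteworthy difference is that the paper builds the required substitution as a \emph{composition} $\tau\tau'\sigma_2$ (with $\sigma_2=\sigma_0\setminus\restrict{\sigma_0}{A}$) rather than as a three-piece patch; this sidesteps precisely the consistency-on-overlapping-variables obstacle you identify, so you may find that formulation cleaner to carry out.
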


\begin{proof}
We have
 $Q_1 = (\vec B_0,!,\vec B_1,Q')\theta_1$, and
 $Q_n = (!,\vec B_1,Q')\theta_1\cdots\theta_n$.
As $P$ is c-s-correct, $\vec B_0\theta_1\cdots\theta_n\subseteq\post\cup\{!\}$.
Without loss of generality we can assume that $do m(\sigma_0)= vars(Q_0)$.
Remember that $A\theta_1=H\theta_1$.

Let $\rho = \restrict{\theta_1}C = \restrict{\theta_1}H$.
Note that 
$A\sigma_0$ is an instance of $A$ and of $H$, as $A\sigma_0$ is covered by $C$.
Thus
$A\sigma_0$ is an instance of $H\theta_1 = H\rho$.
As $A\sigma_0$ is c-covered by $C$, 
from Def.\,\ref{def:c-covered} it follows that
$A\sigma_0$ is covered w.r.t.\ $S$ by $(H\gets\vec B_1)\rho\eta$, for any
$\eta$ such that $\vec B_0\rho\eta$ is ground, 
$\vec B_0\rho\eta\subseteq \post\cup\{!\}$ and $do m(\eta) = \vars(\vec B_0\rho)$.
In particular, this holds for
$\eta = \restrict{ (\theta_2\cdots\theta_n\tau) }{\vec B_0\theta_1}$
(for any $\tau$ for which
 $\vec B_0\rho\eta = \vec B_0\theta_1\cdots\theta_n\tau$ is ground,
and $do m(\tau)\subseteq\vars(\vec B_0\theta_1\cdots\theta_n)$
).

Note that $\eta = 
\left( \restrict{ (\theta_2\cdots\theta_n) }{\vec B_0\theta_1} \right) \tau
$.
Also
$\vec B_0\rho = \vec B_0\theta_1$, and
$(H\gets\vec B_1)\rho= (A\gets\vec B_1)\theta_1$.
Thus
$\mbox{$\vars((H\gets\vec B_1)\rho)$}\subseteq
 \vars(A)\cup\vars(\vec B_1)\cup\vars(\theta_1)$.
By Lemma \ref{lemma:subderivation-variable} applied to $Q_1$,
if a variable $X$ occurs in $(H\gets\vec B_1)\rho$ but not in $\vec B_0\theta_1$
then $X$ does not occur in $\SEQ \theta 2 n$.  Hence 
$X\theta_2\cdots\theta_n = X$, so
$X\eta = X\tau = X\theta_2\cdots\theta_n\tau$, and thus
$(H\gets\vec B_1)\rho\eta = (H\gets\vec B_1)\theta_1\cdots\theta_n\tau
$.
{\sloppy\par}

Let $\varphi=\theta_1\cdots\theta_n$.
As $A\sigma_0$ is covered by $(H\gets\vec B_1)\varphi\tau$, 
for some ground instance $(H\gets\vec B_1)\varphi\tau\tau'$ we have
$H\varphi\tau\tau' = A\varphi\tau\tau'= A\sigma_0$ and
 $B_1\varphi\tau\tau'\subseteq S\cup\{!\}$,
where $do m(\tau') = \vars((H\gets\vec B_1)\varphi\tau)$.

Now (i) $X\sigma_0 = X\varphi\tau\tau'$ for each variable $X$ occurring in $A$,
as $A\sigma_0 = A\varphi\tau\tau'$.
On the other hand, we show that (ii) if $X$ occurs in $Q_0$ but not in $A$ then
$X\varphi\tau\tau' = X$.
Take such variable $X$. 
By Lemma \ref{lemma:subderivation-variable} applied to $Q_0$,
variable X occurs neither in  $(\vec B_0,\vec B_1)\theta_1$, nor in any of
the mgu's \seq\theta.  Hence $X$ does not occur in
$\vec B_0\varphi$, $\vec B_1\varphi$, $A\varphi$.
Thus 
$X\varphi\tau\tau' = X\tau\tau' = X\tau' = X$
(as $X\not\in do m(\varphi)$,
 $X\not\in do m(\tau) \subseteq\vars(\vec B_0\varphi)$,
 $X\not\in do m(\tau') \subseteq\vars((A{\gets}\vec B_1)\varphi)$\,).
This completes the proof of (ii).

    Let us split $\sigma_0$, let $\sigma_1=\restrict{\sigma_0}A$ and
    $\sigma_2=\sigma_0\setminus\sigma_1$.
    As $\sigma_0$ is ground, 
     $\sigma_0 = \sigma_1\sigma_2= \sigma_2\sigma_1$.
Now by (i), for any $X\in\vars(A)$ we have 
$X\sigma_0 = X\varphi\tau\tau' = X\varphi\tau\tau'\sigma_2$ (as $X\sigma_0$ is
ground).  For any $X\in\vars(Q_0)\setminus\vars(A)$ we have 
$X\sigma_0 = X\sigma_2$;  by (ii) it follows
$X\sigma_0 = X\varphi\tau\tau'\sigma_2$.
Thus $Q_0\sigma_0 = Q_0\varphi\tau\tau'\sigma_2$.
Let $\psi=\tau\tau'\sigma_2$.
We have
 $S \models Q_n\psi$,
as  $Q_n\psi$ consists of 
 ${!},\vec B_1\varphi\psi$
and of $Q'\varphi\psi$, which is a fragment of  $Q_0\sigma_0$;
we showed that  $B_1\varphi\tau\tau'\subseteq S\cup\{!\}$ 
(hence $B_1\varphi\psi\subseteq S\cup\{!\}$),
also $S\models Q_0\sigma_0$ holds by the premises of the Lemma.

The required instance $Q_n\sigma'$ is  $Q_n\psi$.
The required unrestricted derivation $D'$ exists by Lemma 
\ref{lemma:antilifing}
(with $Q_0' = Q_0\sigma_0 = Q_0\varphi\psi$ and $Q_n' =  Q_n\psi$).
\end{proof}

\begin{lemma}
\label{lemma:main}
  Let $T, P, Q, S, \pre, \post$ be as in Th.\,\ref {th:complete}.
Consider a node $Q_0$ of $pruned(T)$ with a ground instance $Q_0\sigma$,
such that $Q_0$ is not empty, $S\models Q_0\sigma$, 
and for $Q_0$ it holds that
\begin{equation}
\label{node.condition}
  \begin{array}{l}
\mbox{if the node occurs in a cutting sequence $D$ of }  pruned(T) \\  
\mbox{then it is the introducing node of } D.
  \end{array}
\end{equation}
Then there exists in $pruned(T)$ a descendant $Q_k$ of $Q_0$
satisfying (\ref{node.condition}),
with a ground instance $Q_k\sigma'$ such that $S\models Q_k\sigma'$. 
Moreover, $Q_0\sigma$, $Q_k\sigma'$ are, respectively, the first and the last
query of an unrestricted LD-derivation $D'$ for $P$, with a lift 
$D = Q_0,\ldots,Q_k$
(where $D$ is the LD-derivation consisting of the
nodes between $Q_0$ and $Q_k$ in $pruned(T)$\,).

\end{lemma}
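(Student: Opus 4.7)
The plan is to proceed by induction on the size of the subtree of $pruned(T)$ rooted at $Q_0$. Two preliminary observations drive everything. First, under (\ref{node.condition}) the query $Q_0$ contains no $!$ atom: any such $!$ would have been introduced by a clause application at a proper ancestor of $Q_0$, making $Q_0$ an intermediate or executing node of a cutting sequence whose introducing node is a proper ancestor, contradicting the hypothesis. In particular the first atom $A$ of $Q_0$ is not $!$; c-s-correctness of $P$ combined with well-assertedness (inherited from $Q$) yields $A\in\pre$, and from $S\models Q_0\sigma$ we obtain $A\sigma\in S$, so $A\sigma$ is c-covered w.r.t.\ $S,\pre,\post$ by some clause $C$ of $P$. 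Second, the child $Q_1$ of $Q_0$ produced by $C$ survives in $pruned(T)$: if it were pruned, by (\ref{node.condition}) the pruning would come from a cutting sequence whose introducing node is $Q_0$ and whose first step uses a preceding clause $C'=H'\gets\vec A_0,!,\vec A_1$ of the same procedure as $C$. A successful subderivation of $\vec A_0\theta_1'$ (with $\theta_1'=\mathrm{mgu}(A,H')$) leading to the cut execution would then exist, and by c-s-correctness its computed answer would lie in $\post\cup\{!\}$; composing with a further ground substitution and using $A\theta_1'\cdots=H'\theta_1'\cdots$ produces a ground instance of $A$ covered by $H'\gets\vec A_0$ w.r.t.\ $\post\cap\HB$, contradicting condition~\ref{def:c-covered:cond:previous} of c-coveredness applied with $H''=A\in\pre$.

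Case A, where $C$ has no cut, is then immediate. Writing $C=H\gets\vec B$, we have $Q_1=(\vec B,\vec A)\theta_1$, which contains no $!$, so (\ref{node.condition}) holds vacuously. Covering of $A\sigma$ by $C$ supplies a ground instance $C\gamma$ with $H\gamma=A\sigma$ and $\vec B\gamma\subseteq S$; applying this instance to $Q_0\sigma$ yields an unrestricted one-step derivation from $Q_0\sigma$ to $\vec B\gamma,\vec A\sigma$, which by Lemma~\ref{lemma:antilifing} and Corollary~\ref{cor:lifting} is lifted by the one-step $Q_0\to Q_1$ of $pruned(T)$, producing the required $\sigma'$ with $Q_1\sigma'=\vec B\gamma,\vec A\sigma$ and $S\models Q_1\sigma'$. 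Take $Q_k=Q_1$.

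In Case B, I use Remark~\ref{remark:conditions} to split $C$ at its last $!$, writing $C=H\gets\vec B_0,!,\vec B_1$ with $\vec B_1$ cut-free. The target is $Q_k=Q_{m+1}=(\vec B_1,\vec A)\theta_1\cdots\theta_m$, obtained by executing the cut at $Q_m$ after a successful subderivation $Q_1,\ldots,Q_m$ for $\vec B_0\theta_1$ inside $pruned(T)$. Once such a subderivation is in hand, Lemma~\ref{lemma:sequence} delivers a ground $Q_m\sigma'$ with $S\models Q_m\sigma'$ together with the corresponding lifted unrestricted derivation; the cut step contributes $\epsilon$, so the same $\sigma'$ also serves $Q_{m+1}$, and $Q_{m+1}$ contains no $!$ (since neither $\vec B_1$ nor $\vec A$ does), so (\ref{node.condition}) holds for it. The main obstacle is producing the successful subderivation for $\vec B_0\theta_1$ in $pruned(T)$: every intermediate query of such a subderivation lies strictly inside the cutting sequence whose introducing node is $Q_0$, so these queries themselves violate (\ref{node.condition}) and cannot directly feed a recursive application of the lemma. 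I plan to overcome this by an auxiliary induction that consumes the atoms of $\vec B_0\theta_1$ one at a time on strictly smaller subtrees of $pruned(T)$; condition~\ref{def:c-covered:cond:cut} of c-coveredness (using $S\subseteq\post$) guarantees that after any ground computed answer of a prefix of $\vec B_0\theta_1$ the next atom still admits a c-covering clause whose corresponding child is not pruned, which is the local step needed to iterate.
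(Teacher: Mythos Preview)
Your proposal has a genuine gap in Case~B. You attempt to \emph{construct} a successful subderivation $Q_1,\ldots,Q_m$ for $\vec B_0\theta_1$ inside $pruned(T)$ via an auxiliary induction, but this cannot work as described. Condition~\ref{def:c-covered:cond:cut} of Definition~\ref{def:c-covered} says nothing about prefixes of $\vec B_0$: it asserts only that for \emph{any} ground success of the \emph{whole} of $\vec B_0$ (with atoms in $\post$), the post-cut part $\vec B_1$ still covers $A\sigma$ w.r.t.\ $S$. It gives you no control while you are still inside $\vec B_0$; after a prefix of $\vec B_0\theta_1$ succeeds with some answer in $\post$ (not necessarily in $S$), the next body atom need have no ground instance in $S$ at all, so there is no c-covering clause to invoke and no reason why the corresponding child should survive pruning by a cutting sequence passing through $Q_1$.

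The paper sidesteps this by a different case split: not on whether $C$ contains a cut, but on whether $Q_0$ already occurs in some cutting sequence of $pruned(T)$. If it does not, no child of $Q_0$ is pruned, so the child $Q$ via $C$ is the required $Q_k$---a single step, even when $C$ contains a cut---and $Q$ satisfies~(\ref{node.condition}) vacuously. If it does, then $Q_0$ is the introducing node of a (longest) cutting sequence $Q_0,\ldots,Q_j$ of $pruned(T)$, and the successful subderivation $Q_1,\ldots,Q_j$ for $\vec B_0\theta_1$ is \emph{already present} in $pruned(T)$ by the very definition of cutting sequence; Lemma~\ref{lemma:sequence} then applies to it directly, and $Q_k$ is taken to be the child of $Q_j$. (A further sub-case, where the c-covering child $Q$ lies strictly to the left of $Q_1$, again takes $Q_k=Q$.) Nothing has to be constructed.

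A smaller point: your first observation---that $Q_0$ contains no $!$---is too strong. Condition~(\ref{node.condition}) only forces the \emph{first} atom of $Q_0$ to differ from $!$; a $!$ deeper in $Q_0$ places $Q_0$ in a cutting sequence of $pruned(T)$ only if the corresponding executing node actually lies in $pruned(T)$, which need not happen. Your justification that $Q_{m+1}$ satisfies~(\ref{node.condition}) (``since neither $\vec B_1$ nor $\vec A$ contains $!$'') should therefore be replaced by the argument that any cutting sequence of $pruned(T)$ containing $Q_{m+1}$ as a non-introducing node would also contain $Q_0$ as a non-introducing node, contradicting~(\ref{node.condition}) at $Q_0$.
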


\begin{proof}
Outline:
We first show that $Q_0$ has a child $Q$ in $T$ such that $S\models\exists Q$.
Then we show that $Q$ is a node of $pruned(T)$.
If $Q$ does not occur in a cutting sequence of $pruned(T)$ then 
the required node $Q_k$ is $Q$.
Otherwise $Q$ is the second node of a cutting sequence $D_0$ beginning in $Q_0$.
In this case the required node $Q_k$ is the child of the last node of $D$.
The details follow below.

Let $Q_0=A,Q'$.
Now $A\sigma\in S$, so $A\sigma$ is c-covered, hence covered, w.r.t.\ $S$
by a clause $C = H\gets\vec B$ of $P$.
Node $Q_0$ has a child $Q= (\vec B,Q')\theta$ in $T$
(where $\theta$ is an mgu of $A$ and $H$).
Let $A\sigma\gets\vec B\sigma'$ be a ground instance of $C$ such that 
the atoms of $\vec B\sigma'$ are in $S\cup\{!\}$.
Let $Q''$ be $\vec B\sigma',Q'\sigma$.
Obviously, $S\models Q''$,
and $Q_0\sigma, Q''$ is an unrestricted LD-derivation for $P$.
Now LD-derivation $Q_0,Q$ is its lift and, by Lifting Corollary
\ref{cor:lifting}, $Q''$ is an instance of $Q$.

Assume that $Q_0$ does not occur in any cutting sequence of $pruned(T)$.
Then $Q$ is a node of $pruned(T)$, and
$Q$ is the required descendant of $Q_0$.  Moreover,
derivation $Q_0,Q$ is a lift of $Q_0\sigma,Q''$.

It remains to consider the case of  $Q_0=A,Q'$ being the introducing node of a
cutting sequence $D_0=\SEQ Q 0 j$ of $pruned(T)$. 
Note that if in  $pruned(T)$  there are two such sequences then 
one of them is a prefix of the other.%
\footnote{%
As if a node $R$ occurs in a cutting sequence of $pruned(T)$ then at most one
of its children occurs in a cutting sequence of $pruned(T)$.
(Assume that two children do; then one of them is pruned due to the other one.)
}
(Such distinct sequences are possible when there are multiple cuts in a
clause applied to $Q_0$.)
Let us assume that $D_0$ is the longest cutting sequence with the introducing
node $Q_0$.  Let $\seq[j]\theta$ be the sequence of mgu's of $D_0$ viewed as
an LD-derivation.

So $Q_1= (\vec B_0,!,\vec B_1,Q')\theta_1$,
where $H\gets \vec B_0,!,\vec B_1$ is a (variant of a) clause of $P$,
and $Q_j=(!,\vec B_1,Q')\theta_1\cdots\theta_j$.
From c-s-correctness of $P$ w.r.t.\ $\pre,\post$ it follows that
each selected atom in $D_0$ is in \pre, and the answer in $D_0$ for this atom 
is in \post.  As post is closed under substitution, 
each atom of $\vec B_0\theta_1\cdots\theta_j$ is in $\post\cup\{!\}$.

We are ready to show
that the node $Q$ is present in $pruned(T)$.  Assume it is not.
So $Q$ occurs to the right of $D_0$ in $T$, 
as $Q$ is a child of $Q_0$, and $Q_0$ occurs in $pruned(T)$.
Atom $A\sigma$ is c-covered by $C$, which is preceded in $P$ by 
(a variant of) clause $H\gets \vec B_0,!,\vec B_1$.
Let us make explicit the first occurrence of the cut in the clause:
let $ \vec B_0,!,\vec B_1 =  \vec A_0,!,\vec A_1$,
where $\vec A_0$ does not contain a ! and
is a prefix of $\vec B_0$.

Consider a ground instance 
$C'=(A\gets\vec A_0)\theta_1\cdots\theta_j\rho$ of $H\gets\vec A_0$
(the former is an instance of the latter as $A\theta_1 = H\theta_1$).
Now $A\sigma$ is an instance of $A\in\pre$, and a ground instance 
$A\theta_1\cdots\theta_j\rho$ of $A$ is covered by $H\gets\vec A_0$ w.r.t.\ 
$\post\cap\HB$
(as each atom of the body of clause $C'$ is in $(\post\cap\HB)\cup\{!\}$).
Thus condition \ref{def:c-covered:cond:previous} of Def.\,\ref{def:c-covered}
is violated, and $A\sigma$ is not c-covered by $C$; contradiction.
Hence either $Q$ occurs in $pruned(T)$ to the left of $Q_1$, or $Q=Q_1$.
In the first case (i.e. $Q\neq Q_1$), the node $Q$ is the required descendant
of $Q_0$ (with $Q_0,Q$ being a lift of $Q_0\sigma,Q''$, as shown above).

In the second case (where $Q=Q_1$), by Lemma \ref{lemma:sequence},
node $Q_j$ has an instance $Q_j\sigma'$ such that $S\models Q_j\sigma'$,
there exists an unrestricted LD-derivation $D_0'=Q_0\sigma,\ldots,Q_j\sigma'$,
and $D_0$ is a lift of $D_0'$.
Now the child $Q_k= (\vec B_1,Q')\theta_1\cdots\theta_j$ of $Q_j$ 
is the required node of $pruned(T)$,
$D'$ is $D_0',Q_k\sigma'$, and its lift $D$ is $D_0,Q_k$.
\end{proof}

\smallskip
\begin{proof}
[Proof of Th.\,\ref {th:complete}.]

Let $Q_0$ be the root of $T$, 
and $Q_0\sigma$ be its ground instance such that $S\models Q\sigma$.
As no ! occur in $Q_0$, $Q_0$ satisfies (\ref{node.condition}).
By induction from Lemma \ref{lemma:main} we obtain that there is a successful
or infinite branch $D$ in $pruned(T)$, which is a lift of an unrestricted
derivation $D'$ beginning with $Q_0\sigma$.  As $pruned(T)$ is finite, $D, D'$ are
successful.
Hence the answer $Q_0\sigma$ of $D'$ is an instance of the answer of $D$.
\end{proof}

We conclude with a proposition which simplifies
checking that an atom is c-covered by a clause containing the cut
(condition \ref{def:c-covered:cond:cut} of Def.\,\ref{def:c-covered}).

\begin{propo}
\label{lemma:instances:c-covered:tough:condition}
Assume the notation of Def.\,\ref{def:c-covered}.
If condition \ref{def:c-covered:cond:cut} of Def.\,\ref{def:c-covered} 
holds for an atom $H\rho\in pre$ then it holds for
any instance $H\rho'$ of $H\rho$
such that $A$ is an instance of $H\rho'$, and $\rho'$ satisfies the
requirements of condition   \ref{def:c-covered:cond:cut}
(i.e.\
 $do m(\rho')\subseteq vars(H)$, $r n g(\rho') \cap vars(C) \subseteq vars(H)$, 
 $do m(\rho')\cap r n g(\rho') = \emptyset$).
\end{propo}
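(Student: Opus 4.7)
The plan is to reduce condition~\ref{def:c-covered:cond:cut} of Def.\,\ref{def:c-covered} for $\rho'$ to the same condition for $\rho$ by exhibiting a substitution $\mu$ that witnesses the specialization $H\rho\mu = H\rho'$. Such a $\mu$ exists because $H\rho'$ is an instance of $H\rho$ and both $\rho,\rho'$ have domain contained in $\vars(H)$; we may take $do m(\mu)\subseteq \vars(H\rho)$. A short variable-tracking argument, using the range conditions $r n g(\rho), r n g(\rho')\cap \vars(C)\subseteq \vars(H)$ and idempotence of $\rho,\rho'$, shows that every variable of $\vec B_0$ or $\vec B_1$ lying outside $\vars(H)$ also lies outside $do m(\mu)$ and $r n g(\mu)$; combined with $H\rho\mu = H\rho'$ this lifts the equality to the whole clause: $(H,\vec B_0,\vec B_1)\rho\mu = (H,\vec B_0,\vec B_1)\rho'$.

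Next, take any $\eta'$ satisfying the premises of condition~\ref{def:c-covered:cond:cut} for $\rho'$, i.e.\ $do m(\eta') = \vars(\vec B_0\rho')$ and $\vec B_0\rho'\eta'$ ground and in $\post\cup\{!\}$. Define $\eta$ by $X\eta = X\mu\eta'$ for $X\in \vars(\vec B_0\rho)$, and identity elsewhere. Then $do m(\eta)\subseteq \vars(\vec B_0\rho)$ and $\vec B_0\rho\eta = \vec B_0\rho\mu\eta' = \vec B_0\rho'\eta'$ is ground and in $\post\cup\{!\}$, so $\eta$ meets the premises of condition~\ref{def:c-covered:cond:cut} for $\rho$. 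The assumed covering then yields a ground substitution $\xi$ with $H\rho\eta\xi = A$ and $\vec B_1\rho\eta\xi \subseteq S\cup\{!\}$.

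The concluding step, and the main technical obstacle, is to show that the ground clause $A\gets \vec B_1\rho\eta\xi$ built above is also a ground instance of $(H\gets\vec B_1)\rho'\eta' = (H\gets\vec B_1)\rho\mu\eta'$, hence witnesses the covering of $A$ by $(H\gets\vec B_1)\rho'\eta'$. At positions coming from a variable $X\in \vars((H,\vec B_1)\rho)\cap \vars(\vec B_0\rho)$ the two clauses coincide by construction of $\eta$; at a position coming from $X\in \vars((H,\vec B_1)\rho)\setminus \vars(\vec B_0\rho)$, the clause $(H\gets\vec B_1)\rho\eta$ has $X$, which $\xi$ grounds to $X\xi$, whereas $(H\gets\vec B_1)\rho\mu\eta'$ has the term $X\mu\eta'$. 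Since $A$ is an instance of $H\rho'$, there is a $\tau$ with $H\rho'\tau = A = H\rho\mu\tau$; comparing $H\rho\mu\tau$ and $H\rho\eta\xi$ variable by variable forces $X\mu\tau = X\xi$ for every $X\in \vars(H\rho)\setminus \vars(\vec B_0\rho)$, so $X\xi$ is an instance of $X\mu$ via $\tau$. I would then define $\xi'$ on $\vars((H,\vec B_1)\rho\mu\eta')$ by matching $X\mu\eta'$ against $X\xi$ position by position; the crux, and the step requiring the most care, is verifying global consistency of these matchings, which relies on the fact that a variable $Y\in r n g(\mu\eta')$ occurring at several positions of $(H,\vec B_1)\rho\mu\eta'$ must be assigned by $\xi$ to the same ground term at each occurrence, and on the earlier observation that variables of $\vec B_1$ outside $\vars(H)$ are not disturbed by $\mu$.
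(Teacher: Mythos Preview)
Your first two steps are fine and essentially match the paper: you obtain a substitution $\mu$ (the paper calls it $\delta$) with $C\rho\mu=C\rho'$, and from any $\eta'$ you build $\eta$ with $\vec B_0\rho\eta=\vec B_0\rho'\eta'$, so condition~\ref{def:c-covered:cond:cut} for $\rho$ produces a ground $\xi$ covering $A$ by $(H\gets\vec B_1)\rho\eta$.

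The gap is in the last step. You need to exhibit $\xi'$ with $(H\gets\vec B_1)\rho\mu\eta'\xi'=(H\gets\vec B_1)\rho\eta\xi$, and you yourself flag ``verifying global consistency'' as the crux---but you do not carry it out. Two concrete holes: (i)~for $X\in\vars(H\rho)\setminus\vars(\vec B_0\rho)$ you show only $X\xi=X\mu\tau$, i.e.\ that $X\xi$ is an instance of $X\mu$; you still have to argue that $X\xi$ is an instance of the \emph{more specific} $X\mu\eta'$, which requires proving $Z\tau=Z\eta'$ for every $Z\in\vars(X\mu)\cap do m(\eta')$ (this can be done by showing any such $Z$ lies in $\vars(X'\mu)$ for some $X'\in\vars(H\rho)\cap\vars(\vec B_0\rho)$, whence $X'\mu\tau=X'\eta\xi=X'\mu\eta'$); (ii)~your consistency sketch says a variable $Y$ in $(H,\vec B_1)\rho\mu\eta'$ ``must be assigned by $\xi$ to the same ground term at each occurrence'', but $\xi$ acts on $(H,\vec B_1)\rho\eta$, not on $(H,\vec B_1)\rho\mu\eta'$, and the same $Y$ may sit over different variables of $(H,\vec B_1)\rho\eta$. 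The argument can be completed along these lines, but substantial case analysis is missing.

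The paper avoids all of this by proving an auxiliary equivalence (its Lemma~\ref{lemma:instances-derivations}): $A$ is covered by $(H\gets\vec B_1)\rho\eta$ w.r.t.\ $S$ iff there is a successful LD-derivation for $A$ that uses $C$, then the ground unit clauses $A_1\rho\eta,\ldots,A_{k-1}\rho\eta$ (the atoms of $\vec B_0\rho\eta$), then atoms from $S\cup\{!\}$. The right-hand side depends only on the ground sequence $\vec B_0\rho\eta$, which equals $\vec B_0\rho'\eta'$; applying the equivalence in the other direction immediately gives that $A$ is covered by $(H\gets\vec B_1)\rho'\eta'$. This detour through derivations replaces your substitution-consistency argument with a one-line observation.
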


\renewcommand\AA{{\ensuremath{\vec A}}\xspace}

\begin{lemma}
\label{lemma:instances-derivations}
Let 
$C$ be a clause $H\gets\vec B_0,!,\vec B_1$.
Let $S$, $A$, $\rho,\ \eta$, $H\rho $ and $\vec B_0\rho\eta$ be as in condition 
\ref{def:c-covered:cond:cut} of Def.\,\ref{def:c-covered}.
Let $\vec B_0 =   A_1,\ldots,A_{k-1}$ and 
$\vec B_1 =   A_k,\ldots,A_{n}$
The following conditions (1) and (2) are equivalent.

\smallskip\noindent
(1) 
$A$ is covered by $(H\gets \vec B_1)\rho\eta$ w.r.t.\ $S$.

\smallskip\noindent
(2) There exists a successful LD-derivation for $A$ using in its consecutive
steps the clauses $C$,  $ A_1\rho\eta,\ldots,A_{k-1}\rho\eta$, then !, and then
some atoms from $S\cup\{!\}$.
\end{lemma}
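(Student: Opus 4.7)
\medskip

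The plan is to establish the two directions of the equivalence by constructing the object promised by one side out of the object given by the other, with the substitutions of the LD-derivation carefully matched to $\rho,\eta$, and the grounding substitution $\psi$ of the covering. Set up notation: a ground instance $(H\gets \vec B_1)\rho\eta\psi$ of $(H\gets\vec B_1)\rho\eta$ witnessing (1) satisfies $H\rho\eta\psi = A$ and $\vec B_1\rho\eta\psi\subseteq S\cup\{!\}$. Since $\vec B_0\rho\eta = A_1\rho\eta,\ldots,A_{k-1}\rho\eta$ is already ground by the requirement $do m(\eta)=\vars(\vec B_0\rho)$, the composite $\rho\eta\psi$ grounds the full clause $C$ into $A\gets \vec B_0\rho\eta,\,!,\,\vec B_1\rho\eta\psi$.

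For $(1)\Rightarrow(2)$, I would take a fresh variant $C^v = H^v\gets \vec B_0^v,!,\vec B_1^v$ of $C$ and apply it to $A$. Because $A$ is a ground instance of $H\rho$, hence of $H^v$, the mgu $\theta_1$ of $A$ and $H^v$ exists, and by choosing the renaming $C^v\mapsto C$ appropriately, $\theta_1$ matches the substitution $\rho\eta\psi$ on the variables of $C^v$. The resulting query after this first step is, up to the variant renaming, $\vec B_0\rho\eta,\,!,\,\vec B_1\rho\eta\psi$. The next $k-1$ selected atoms are exactly $A_1\rho\eta,\ldots,A_{k-1}\rho\eta$, and the corresponding subderivations introduce no new bindings on variables of $\vec B_1\rho\eta\psi$ (these atoms are already ground); then $!$ is selected, contributing the empty substitution and leaving $\vec B_1\rho\eta\psi$, whose atoms are in $S\cup\{!\}$ as required.

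For $(2)\Rightarrow(1)$, I would read the substitutions back from the given derivation. The mgu of the first step, restricted to $\vars(H)$ (modulo the variant renaming), yields a substitution extending $\rho$ whose composition with the mgu's accumulated during the subderivations for $A_1\rho\eta,\ldots,A_{k-1}\rho\eta$ agrees with $\eta$ on $\vars(\vec B_0\rho)$ (since those atoms are ground and no instantiation on $\vars(\vec B_1)\setminus \vars(\vec B_0)$ can come from resolving ground atoms, by Lemma~\ref{lemma:subderivation-variable}). The remaining portion of the derivation, after the cut, succeeds on $\vec B_1\rho\eta$ through a substitution $\psi$ that grounds it into atoms of $S\cup\{!\}$; this composite ground instance is exactly what is needed to witness covering of $A$ by $(H\gets \vec B_1)\rho\eta$ w.r.t.~$S$.

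The main obstacle is the bookkeeping: one has to verify that the mgu's of the LD-derivation, which act on fresh variants with their own variables, project under the renaming to the substitutions $\rho,\eta,\psi$ that appear in the covering. The key technical fact one needs is that variables outside $\vec B_0$ are untouched during the subderivations for $\vec B_0\rho\eta$ (a consequence of standardization-apart and relevance of mgu's, i.e.\ Lemma~\ref{lemma:subderivation-variable}); this is what allows cleanly splitting the accumulated substitution into the part matching $\eta$ and the part matching $\psi$, and is what makes the two directions genuinely inverse to each other.
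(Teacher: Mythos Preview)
Your overall strategy matches the paper's: in each direction you convert between a covering ground instance and a derivation. But both directions of your sketch contain concrete errors.

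For $(1)\Rightarrow(2)$, your claim that ``$\theta_1$ matches the substitution $\rho\eta\psi$ on the variables of $C^v$'' is wrong. Since $A$ is ground, the mgu $\theta_1$ of $A$ and $H^v$ is simply the matcher sending $H^v$ to $A$; it acts only on $\vars(H^v)$ and leaves the remaining variables of $C^v$ untouched. Hence the query after the first step is $(\vec B_0^v,!,\vec B_1^v)\theta_1$, which in general is \emph{not} ground, and the next selected atoms are \emph{not} ``exactly $A_1\rho\eta,\ldots,A_{k-1}\rho\eta$'' but more general atoms that still need to be unified with those ground facts. Your subsequent remark that ``these atoms are already ground'' compounds the confusion. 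The paper sidesteps this entirely: it first builds an \emph{unrestricted} LD-derivation for $A$ using the fully ground clause $C\rho\eta\sigma$ (so every query is ground from the start), and then invokes the Lifting Corollary~\ref{cor:lifting} to obtain the required genuine LD-derivation. That is the missing idea in your direction.

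For $(2)\Rightarrow(1)$, your appeal to Lemma~\ref{lemma:subderivation-variable} is misplaced. The clean observation, which the paper uses, is that because $A$ is ground and every clause applied after the first is a ground atom, \emph{all} the mgu's $\theta_1,\ldots,\theta_{n+2}$ are ground substitutions; consequently $A_i\theta_1\cdots\theta_{i+1}=A_i\theta_1\cdots\theta_{n+2}$ for every $i$, and one can compare the composite $\theta_1\cdots\theta_{n+2}$ directly with $\rho\eta$ on $\vec B_0$. The paper then does explicit substitution algebra: it writes $A=H\rho\delta$, shows $\theta_1=\restrict{(\rho\delta)}{\vars(C)}$ using the domain/range constraints on $\rho$ from Def.~\ref{def:c-covered}, deduces $\eta=\restrict{(\delta\theta_2\cdots\theta_{n+2})}{\vec B_0\rho}$, and sets $\sigma$ to be the remainder, giving $C\theta_1\cdots\theta_{n+2}=C\rho\eta\sigma$. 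Your sketch gestures at this decomposition but does not identify the key fact (groundness of all mgu's) that makes it work, and the tool you cite does not deliver it.
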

Note that in (2) all the queries and all the clauses used in the derivation,
except $C$, are ground. 
\pagebreak[3]

\begin{proof}
\mbox{(1) $\Rightarrow$ (2)}:
\hspace{0pt plus .5em}%
(1) implies that 
 $A$ is covered by a ground clause $(H\gets \vec B_1)\rho\eta\sigma$.
Construct
an LD-derivation $D$ for $A$, using first clause $C\rho\eta\sigma$ and then the
clauses as in (2).  Its lift is a required derivation.
{\sloppy\par}

\medskip\noindent
(2) $\Rightarrow$ (1):
Take a derivation as in (2):
\[
\begin{array}{l@{\qquad}l}
  A	\\
  (\seq A)\theta_1                    & \theta_1	\\
  \ldots		              & \ldots          \\
  (!,\SEQ A {k} n)\theta_1\cdots\theta_k  & \theta_k	\\  
  \ldots		              & \ldots          \\
  A_n\theta_1\cdots\theta_{n+1}        & \theta_{n+1}	\\  
  \raisebox{-.5ex}{$\Box$}                                & \theta_{n+2}
\end{array}
\]
Its mgu's are ground substitutions.
We have
$A=H\theta_1 = H \SEQC \theta 1 {n+2}$, 
and the ground clauses used in the derivation are
$A_i\SEQC\theta 1 {i+1} = A_i\SEQC \theta 1 {n+2}$  ($i=1,\ldots,k-1$),
and
$A_i\SEQC\theta 1 {i+2} = A_i\SEQC \theta 1 {n+2}$  ($i=k,\ldots,n$). 
Comparing this with (2) gives
$A_i\SEQC \theta 1 {n+2} = A_i\rho\eta$,  for $i=1,\ldots,k-1$,
and   $A_i\SEQC \theta 1 {n+2}\in S$ for  $i=k,\ldots,n$.
Hence  $\vec B_0 \rho\eta = \vec B_0 \SEQC \theta 1 {n+2}$.
The rest of the proof, roughly speaking, deals with representing 
substitution $\SEQC \theta 1 {n+2}$ as a composition of $\rho\eta$
and a certain substitution $\sigma$.  As a result, we obtain a ground
instance of $(H\gets\vec B_1)\rho\eta$ which covers $A$.

Now $A = H\rho\delta$  for some ground substitution $\delta$ with 
$do m(\delta)=vars(H\rho)$. 
So 
$\theta_1=\restrict{(\rho\delta)}{vars(H)}$, as $do m(\theta_1)=vars(H)$.
Note that $do m(\delta) \cap vars(C)\subseteq vars(H)$
(as $do m(\delta)\subseteq\vars(H)\cup r n g(\rho)$, and
from Def.\,\ref{def:c-covered} we have
\linebreak[3]
{$r n g(\rho) \cap \vars(C)\subseteq \vars(H)$}).
Hence
$\theta_1=\restrict{(\rho\delta)}{vars(H)} = 
\restrict{(\rho\delta)}{{\it vars}(C)}$, and thus
$C\theta_1 = C\rho\delta$.
In particular, $\vec B_0\theta_1 = \vec B_0\rho\delta$.
So  $\vec B_0\rho\eta 
 = \vec B_0\SEQC\theta1{n+2}
 = \vec B_0\rho\delta \SEQC\theta2{n+2} 
$.
Thus
$\eta
      = \restrict{(\delta \SEQC \theta 2 {n+2})}{\vec B_0\rho}
$
(as $do m(\eta)=vars(\vec B_0\rho)$).

Let
$\sigma = {(\delta \SEQC \theta 2{n+2})} \setminus \eta$.
As $\eta$ and $\sigma$ are ground and with disjoint domains,
$\delta \SEQC \theta 2{n+2} = \eta\cup\sigma = \eta\sigma$.
Hence
$C\SEQC \theta 1{n+2}= C\rho\delta\SEQC \theta 2{n+2} = C\rho\eta\sigma$
(as $C\theta_1 = C\rho\delta$).
So
$H\rho\eta\sigma= H\SEQC \theta 1{n+2} = A$ and 
$A_i\rho\eta\sigma = A_i\SEQC \theta 1{n+2}\in S$, for $i=k,\ldots,n$.
Hence
$A$ is covered by $(H \gets A_k,\ldots,A_n)\rho\eta$ w.r.t.\ $S$.
\end{proof}

\begin{proof}[Proof of Proposition
   \ref{lemma:instances:c-covered:tough:condition}.]
Let $C$, the clause used in condition \ref{def:c-covered:cond:cut}, be
$H\gets\vec B_0,!,\vec B_1$.  Let  $\vec B_0$ be $\seq[{k-1}]A$.  
We first show that $\vec B_0\rho'$ is an instance of $\vec B_0\rho$.
For some~$\delta$ with $do m(\delta)\subseteq vars(H\rho)$, we have
$H\rho'=H\rho\delta$, so $\rho'=\restrict{(\rho\delta)}{H}$.
Consider a variable $X$ from $C$. There are two cases:
\\
1.  $X\in vars(H)$, thus  $X\rho'=X\rho\delta$. \
\\
2.  $X\not\in vars(H)$. So $X\rho'=X$.
Also $X\not\in do m(\rho)$, as $do m(\rho) \subseteq vars(H)$.
From $do m(\delta)\subseteq vars(H)\cup r n g(\rho)$ it follows that
 $do m(\delta)\cap vars(C)\subseteq vars(H)$
(as $r n g(\rho) \cap vars(C) \subseteq vars(H)$).
So $X\not\in do m(\delta)$.  Hence $X\rho\delta=X$ and $X\rho'=X=X\rho\delta$.

   We showed that $\rho' = \restrict{(\rho\delta)}C$.
So $\vec B_0\rho'=\vec B_0\rho\delta $.
Then each ground instance $\vec B_0\rho'\eta'$ of $\vec B_0\rho'$  
such that $\vec B_0\rho'\eta'\in\post$
is an instance of $\vec B_0\rho$
($\vec B_0\rho'\eta'= \vec B_0\rho\delta\eta'= \vec B_0\rho\eta$ where 
 $\eta=(\restrict{\delta}{ \mbox{\scriptsize$\vec B_0$}\rho})\eta'$).

Assume that condition  \ref{def:c-covered:cond:cut} holds for $H\rho$.
Then for each ground instance  $\vec B_0\rho'\eta'$ as above, where each atom of
$\vec B_0\rho\eta$ is in $post\cup\{!\}$,
atom $A$ is covered w.r.t.\ $S$ by $(H\gets \vec B_1)\rho\eta$.
By Lemma \ref{lemma:instances-derivations} 
there exists a successful LD-derivation for $A$ using in its consecutive
steps the clauses $C$,  $ A_1\rho\eta,\ldots,A_{k-1}\rho\eta$, and then
some atoms from $S\cup\{!\}$.
As $ A_i\rho\eta= A_i\rho'\eta'$ for $i=1,\ldots,k-1$,
by Lemma \ref{lemma:instances-derivations} used in the opposite direction,
$A$ is covered by  $(H\gets \vec B_1)\rho'\eta'$.
\end{proof}

{\enlargethispage*{1.2\baselineskip}}
\bibliographystyle{alpha}
\bibliography{bibshorter,bibpearl,bibmagic,bibcut}

\newcommand{\etalchar}[1]{$^{#1}$}
\begin{thebibliography}{SGS{\etalchar{+}}10}

\bibitem[And03]{DBLP:journals/tplp/Andrews03}
James~H. Andrews.
\newblock The witness properties and the semantics of the {Prolog} cut.
\newblock {\em {TPLP}}, 3(1):1--59, 2003.

\bibitem[AP93]{AP93}
K.~R. Apt and D.~Pedreschi.
\newblock Reasoning about termination of pure {P}rolog programs.
\newblock {\em Information and Computation}, 106(1):109--157, 1993.

\bibitem[Apt97]{Apt-Prolog}
K.~R. Apt.
\newblock {\em From Logic Programming to {P}rolog}.
\newblock International Series in Computer Science. Prentice-Hall, 1997.

\bibitem[BC89]{DBLP:conf/tapsoft/BossiC89}
A.~Bossi and N.~Cocco.
\newblock Verifying correctness of logic programs.
\newblock In J.~D\'{\i}az and F.~Orejas, editors, {\em TAPSOFT, Vol.2}, volume
  352 of {\em Lecture Notes in Computer Science}, pages 96--110. Springer,
  1989.

\bibitem[Bez93]{DBLP:journals/jlp/Bezem93}
M.~Bezem.
\newblock Strong termination of logic programs.
\newblock {\em J. Log. Program.}, 15(1{\&}2):79--97, 1993.

\bibitem[Bil90]{DBLP:journals/tcs/Billaud90}
Michel Billaud.
\newblock Simple operational and denotational semantics for prolog with cut.
\newblock {\em Theor. Comput. Sci.}, 71(2):193--208, 1990.

\bibitem[Cla79]{Clark79}
K.~L. Clark.
\newblock Predicate logic as computational formalism.
\newblock Technical Report 79/59, Imperial College, London, December 1979.

\bibitem[Der93]{DBLP:journals/tcs/Deransart93}
P.~Deransart.
\newblock Proof methods of declarative properties of definite programs.
\newblock {\em Theor. Comput. Sci.}, 118(2):99--166, 1993.

\bibitem[DM88]{DM88}
W.~Drabent and J.~Ma{\l}uszy\'{n}ski.
\newblock {Inductive assertion method for logic programs}.
\newblock {\em Theoretical Computer Science}, 59:133--155, 1988.

\bibitem[DM93]{Deransart.Maluszynski93}
P.~Deransart and J.~Ma{\l}uszy\'nski.
\newblock {\em A Grammatical View of Logic Programming}.
\newblock The MIT Press, 1993.

\bibitem[DM05]{DBLP:journals/tplp/DrabentM05shorter}
W.~Drabent and M.~Mi{\l}kowska.
\newblock Proving correctness and completeness of normal programs -- a
  declarative approach.
\newblock {\em TPLP}, 5(6):669--711, 2005.

\bibitem[Doe94]{Doets}
K.~Doets.
\newblock {\em From Logic to Logic Programming}.
\newblock The MIT Press, Cambridge, MA, 1994.

\bibitem[Dra14]{drabent.arxiv.coco14}
W.~Drabent.
\newblock Correctness and completeness of logic programs.
\newblock {\em CoRR}, 2014.
\newblock {\small{\url{http://arxiv.org/abs/1412.8739}}}. Final version to
  appear in {\em ACM Transactions on Computational Logic}.

\bibitem[Dra15a]{drabent.lopstr14}
W.~Drabent.
\newblock On completeness of logic programs.
\newblock In {\em Logic Based Program Synthesis and Transformation, LOPSTR
  2014. Revised Selected Papers}, volume 8981 of {\em Lecture Notes in Computer
  Science}. Springer, 2015.
\newblock Extended version in {\em CoRR} abs/1411.3015 (2014).
  \small{\url{http://arxiv.org/abs/1411.3015}}.

\bibitem[Dra15b]{drabent.Herbrand.arxiv.2015}
W.~Drabent.
\newblock {On definite program answers and least Herbrand models}.
\newblock {\em CoRR}, abs/1503.03324, 2015.
\newblock \small{\url{http://arxiv.org/abs/1503.03324}}. To appear in {\em
  Theory and Practice of Logic Programming}.

\bibitem[dV89]{DBLP:journals/scp/Vink89}
E.~P. de~Vink.
\newblock Comparative semantics for {PROLOG} with cut.
\newblock {\em Sci. Comput. Program.}, 13(1):237--264, 1989.

\bibitem[KK14]{DBLP:conf/flops/KrienerK14}
Jael Kriener and Andy King.
\newblock Semantics for {Prolog} with cut - revisited.
\newblock In Michael Codish and Eijiro Sumii, editors, {\em Functional and
  Logic Programming - 12th International Symposium, {FLOPS} 2014, Kanazawa,
  Japan, June 4-6, 2014. Proceedings}, volume 8475 of {\em Lecture Notes in
  Computer Science}, pages 270--284. Springer, 2014.

\bibitem[Llo87]{lloyd87}
J.~W. Lloyd.
\newblock {\em Foundations of Logic Programming}.
\newblock Springer, 1987.
\newblock Second, extended edition.

\bibitem[Mah88]{DBLP:books/mk/minker88/Maher88}
Michael~J. Maher.
\newblock Equivalences of logic programs.
\newblock In Jack Minker, editor, {\em Foundations of Deductive Databases and
  Logic Programming.}, pages 627--658. Morgan Kaufmann, 1988.

\bibitem[NM95]{nilsson.maluszynski.book}
U.~Nilsson and J.~Ma{\l}uszy\'nski.
\newblock {\em Logic, Programming and Prolog \/{\rm(2ed)}}.
\newblock Previously published by John Wiley \& Sons Ltd, 1995.
\newblock \url{http://www.ida.liu.se/~ulfni/lpp/}.

\bibitem[SGS{\etalchar{+}}10]{DBLP:journals/tplp/Schneider-KampGSST10}
Peter Schneider{-}Kamp, J{\"{u}}rgen Giesl, Thomas Str{\"{o}}der, Alexander
  Serebrenik, and Ren{\'{e}} Thiemann.
\newblock Automated termination analysis for logic programs with cut.
\newblock {\em {TPLP}}, 10(4-6):365--381, 2010.

\bibitem[Sha83]{Shapiro.book}
E.~Shapiro.
\newblock {\em Algorithmic Program Debugging}.
\newblock The MIT Press, 1983.

\bibitem[Spo00]{DBLP:journals/jlp/Spoto00}
Fausto Spoto.
\newblock Operational and goal-independent denotational semantics for {Prolog}
  with cut.
\newblock {\em J. Log. Program.}, 42(1):1--46, 2000.

\bibitem[SS94]{Sterling-Shapiro-short}
L.~Sterling and E.~Shapiro.
\newblock {\em The Art of Prolog}.
\newblock The MIT Press, 2 edition, 1994.

\end{thebibliography}

\end{document}